\newcounter{mytempeqncnt}
\newtheorem{thm}{ Theorem}
\newtheorem{lem}{Lemma} 
\newenvironment{proof}[1][Proof]{\begin{trivlist}
\item[\hskip \labelsep {\bfseries #1}]}{\end{trivlist}}
\newcommand{\qed}{\nobreak \ifvmode \relax \else
      \ifdim\lastskip<1.5em \hskip-\lastskip
      \hskip1.5em plus0em minus0.5em \fi \nobreak
      \vrule height0.75em width0.5em depth0.25em\fi}
\newcommand{\MeijerG}[7]{G^{#1,#2}_{#3,#4} \left(\! \begin{smallmatrix} #5 \\ #6 \end{smallmatrix} \middle\vert #7 \!\right) }
\begin{document}
\title{Random Aerial Beamforming for Underlay Cognitive Radio with Exposed Secondary Users}
\author{Ahmed~M.~Alaa,~\IEEEmembership{Student Member,~IEEE}, Mahmoud~H.~Ismail,~\IEEEmembership{Member,~IEEE} and~Hazim~Tawfik
\thanks{The authors are with the Department
of Electronics and Electrical Communications Engineering, Faculty of Engineering, Cairo University, Giza 12613, Egypt (e-mail: \{aalaa, mismail, htawfik\}@eece.cu.edu.eg).}
\thanks{Manuscript received XXXX XX, 201X; revised XXXX XX, 201X.}}

\markboth{XXXXXXX,~Vol.~XX, No.~X, XXXX~201X}
{Alaa \MakeLowercase{\textit{et al.}}: Random Aerial Beamforming for Underlay Cognitive Radio with Exposed Secondary Users}

\maketitle
\begin{abstract}
In this paper, we introduce the {\it exposed secondary users problem} in {\it underlay cognitive radio} systems, where both the secondary-to-primary and primary-to-secondary channels have a Line-of-Sight (LoS) component. Based on a Rician model for the LoS channels, we show, analytically and numerically, that LoS interference hinders the achievable secondary user capacity when interference constraints are imposed at the primary user receiver. This is caused by the poor dynamic range of the interference channels fluctuations when a dominant LoS component exists. In order to improve the capacity of such system, we propose the usage of an Electronically Steerable Parasitic Array Radiator (ESPAR) antennas at the secondary terminals. An ESPAR antenna involves a single RF chain and has a reconfigurable radiation pattern that is controlled by assigning arbitrary weights to $M$ orthonormal basis radiation patterns via altering a set of reactive loads. By viewing the orthonormal patterns as multiple {\it virtual dumb antennas}, we randomly vary their weights over time creating artificial channel fluctuations that can perfectly eliminate the undesired impact of LoS interference. This scheme is termed as {\it Random Aerial Beamforming} (RAB), and is well suited for compact and low cost mobile terminals as it uses a single RF chain. Moreover, we investigate the exposed secondary users problem in a multiuser setting, showing that LoS interference hinders multiuser interference diversity and affects the growth rate of the SU capacity as a function of the number of users. Using RAB, we show that LoS interference can actually be exploited to improve multiuser diversity via {\it opportunistic nulling}.           	
\end{abstract}
\IEEEpeerreviewmaketitle
\begin{IEEEkeywords}
Aerial degrees of freedom; cognitive radio; dumb antennas; line-of-sight channels; multiuser diversity; underlay cognitive radio
\end{IEEEkeywords}

\IEEEpeerreviewmaketitle
\section{Introduction}
\IEEEPARstart{S}{ignificant} interest has recently been devoted to the capacity analysis of underlay cognitive radio systems in fading environments. In underlay cognitive radio, a Secondary User (SU) aggressively transmits its data over the Primary User (PU) channel while keeping the interference experienced by the PU below a predefined {\it interference temperature} \cite{1}. The PU is usually assumed to be oblivious to the SU activity, thus power control is applied by the SU transmitter in order to meet with the predefined interference constraints. In \cite{2}, Gastpar has shown that imposing a receive power constraint in an AWGN interference channel does not affect the channel capacity. However, in a fading environment, Ghasemi and Sousa proved that channel fluctuations can be exploited to improve the SU capacity when the Channel State Information (CSI) is available at the SU transmitter \cite{2d}. This capacity improvement is attributed to the ability of the SU to transmit with very high power when the interference channel is severely faded. In \cite{3} and \cite{4}, Musavian {\it et al.} derived the ergodic, outage, and minimum-rate capacities under peak and average interference constraints at the PU receiver. However, the interference experienced by the SU receiver due to primary transmission was not considered. Recently, the ergodic capacity of underlay cognitive radio taking both PU and SU interference into consideration was calculated in \cite{5}. However, the analysis therein is limited to the case when all channels are severely faded, and considers an average interference power constraint only.

In a practical underlay cognitive setting, the SU capacity generally depends on two interference channels, namely; the primary-to-secondary and secondary-to-primary channels \cite{5}. If these channels are subject to Non-Line-of-Sight (NLoS) fading, then the interference channel gains perceived by the PU and SU receivers fluctuate drastically over time \cite{6}. Therefore, the SU transmitter can exploit such fluctuations by opportunistically allocating higher power to time instants when the Signal-to-Interference-and-Noise-Ratio (SINR) is large, and lower power to time instants with poor SINR \cite{2}-\cite{5}. In this paper, we study a practical underlay cognitive radio system with joint peak and average interference power constraints, where the mutual interference channels have dominant LoS (specular) components. We show that in this case, there are limited opportunities for the SU due to the poor dynamic range of channel fluctuations. This results in a significant SU capacity degradation. We term this problem as the {\it exposed secondary user problem}, i.e., the SU is exposed to the PU via a direct LoS link. Such problem would arise in emerging cognitive radio technologies such as underlay {\it Device-to-Device (D2D) Communications}, where the small antenna heights imply that a strong LoS component is likely to exist between the primary and secondary devices \cite{7}. It also appears in other recent cognitive radio approaches, such as the application of underlay cognitive radio to satellite systems \cite{10x3}, where the Rician channel models inherent in conventional satellite communications will indeed be involved.

A straightforward approach to improve the capacity of an arbitrary cognitive radio scheme is the deployment of multiple antennas. Traditional multiple antenna diversity techniques were employed in \cite{8}-\cite{10x1} to improve the SU capacity in spectrum sharing systems. However, the usage of multiple uncorrelated antennas is inhibited by the space limitations of mobile SU transceivers. This is in addition to the need for multiple RF chains, which increases the cost and complexity of the SU equipment. While such overhead is bearable for a base station, it can not be tolerated for modern mobile terminals. Moreover, emerging cellular underlay {\it D2D} technology involves a mobile SU transmitter and a mobile SU receiver \cite{11}, which prevents the deployment of multiple antennas at either terminals. In order to reduce the hardware complexity of multiple antenna systems, underlay cognitive radio with antenna selection is implemented using a single RF chain in \cite{9} and \cite{10x1}. However, this scheme requires antennas to be sufficiently seperated in order to achieve full spatial diversity order. Moreover, an important limitation of all multiple-antenna schemes is the essence of obtaining the CSI for every antenna in order to achieve diversity or multiplexing gains.

The main contribution of this paper is the usage of single Electronically Steerable Parasitic Array Radiator (ESPAR) antennas at both the SU transmitter and receiver. In LoS channels, these antennas are used to create artificial channel fluctuations to restore the transmission opportunities. This is achieved by a technique that we refer to as {\it Random Aerial Beamforming} (RAB), where random time-varying complex weights are assigned to the orthonormal basis radiation patterns of the ESPAR antenna. Inspired by the seminal work of Viswanath \textit{et al.} \cite{6}, we term the orthonormal radiation patterns provided by the ESPAR antenna as {\it dumb basis patterns}, since they represent Degrees of Freedom (DoFs) that are neither used to achieve diversity nor to realize multiplexing. Our analytical and numerical results show that the proposed scheme can make the LoS interference transparent to both the secondary and primary systems, thus achieving the same capacity of the symmetric Rayleigh channel. Furthermore, if the secondary-to-secondary channel has a LoS component, we present a novel scheme that induces fluctuations in the interference channels while maintaining the reliability of the secondary-to-secondary channel. We refer to this technique as {\it artificial diversity using smart basis patterns}. The proposed schemes have several advantages. First, no extra hardware complexity is involved as only a single RF chain is used. Second, only overall CSI is required to apply optimal power allocation. Finally, the proposed schemes can be used in low cost mobile terminals with tight space limitations.

In addition to the capacity benefits of RAB in point-to-point underlay cognitive radio systems, we show that RAB can play an important role in multiuser systems. By generalizing the exposed SUs problem to an $N$ user {\it parallel access channel}, we prove that LoS interference reduces the scaling law of the SU sum capacity from $\log(N)$ to $\log(\log(N))$. By using RAB, artificial fluctuations are induced in the interference channels restoring the $\log(N)$ scaling law. A fundamental result of this paper is that using RAB, LoS interference can actually be exploited. We show that using only 2 basis patterns, multiuser interference diversity can be improved via opportunistic nulling. In this case, LoS interference act as a friend and not a foe.

The application of ESPAR and other classes of reconfigurable antennas to wireless communications systems is not actually new. In \cite{12}-\cite{23}, the {\it Aerial DoF (ADoF)} provided by the orthonormal basis patterns are used to construct single radio {\it Beamspace-Multi-Input Multi-Output (MIMO)} systems that can apply spatial multiplexing without the need to deploy multiple antennas. Recently, beamspace-MIMO was applied for millimeter-Wave (mm-Wave) systems using parasitic antenna arrays with reconfigurable loads \cite{10x7}. Reconfigurable antennas with predefined sets of radiation patterns were also employed in \cite{10x5} to achieve diversity in spectrum sensing for interweave cognitive radio. In \cite{16}, ESPAR antennas were exploited for detecting spatial holes in multiple PU cognitive radio systems. Moreover, blind interference alignment was implemented utilizing the reconfigurability feature of the ESPAR antenna in \cite{17}. Other applications of ESPAR antennas in multiuser systems can be found in \cite{18}.

The rest of the paper is organized as follows: the system model is presented in Section II. In Section III, we present the exposed secondary users problem highlighting the capacity degradation due to LoS interference. The concept of RAB is then proposed in Section IV. In Section V, we present the multiple exposed secondary users problem and investigate the impact of RAB on multiuser interference diversity. Numerical results are presented in Section V and conclusions are drawn in Section VI.

\section{System Model}
We divide this section into three subsections. First, we briefly review the ESPAR antenna structure and parameters. Next, we present the beamspace domain channel model and show that orthonormal basis patterns can be viewed as virtual multiple antennas. Finally, we present a generic system model for the spectrum-sharing system under study with and without ESPAR antennas at SU terminals.
\begin{figure}[t]
\centering
\includegraphics[width=2.5in]{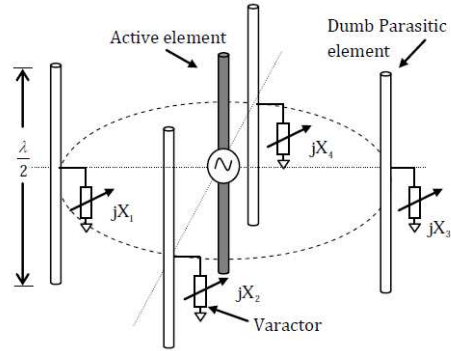}
\caption{The ESPAR antenna with 4 parasitic elements.}
\label{fig_sim}
\end{figure}

\subsection{The ESPAR Antenna}
As shown in Fig. 1, an ESPAR with $M$ elements is composed of a single active element (e.g., a $\frac{\lambda}{2}$ dipole) that is surrounded by $M-1$ identical parasitic elements. Unlike multi-antenna systems, the parasitic elements are placed relatively close to the active elements. Hence, mutual coupling between different elements takes place and current is induced in all parasitic elements. The radiation pattern of the ESPAR is altered by tuning a set of $M-1$ reactive loads (varactors) $\mathbf{x} = \left[jX_{1} \ldots jX_{M-1}\right]$ attached to the parasitic elements \cite{13}. The currents in the parasitic and active elements are represented by an $M \times 1$ vector $\mathbf{i} = v_{s} (\mathbf{Y}^{-1}+\mathbf{X})^{-1}\mathbf{u}$, where
\[
\mathbf{Y} =
 \begin{pmatrix}
  y_{11} & y_{12} & \cdots & y_{1M} \\
  y_{21} & y_{22} & \cdots & y_{2M} \\
  \vdots  & \vdots  & \ddots & \vdots  \\
  y_{M1} & y_{M2} & \cdots & y_{MM}
 \end{pmatrix}\]
is the $M \times M$ admittance matrix with $y_{ij}$ being the mutual admittance between the $i^{th}$ and $j^{th}$ elements. The load matrix $\mathbf{X}$ = {\bf diag}$\left(50,\,\, \mathbf{x}\right)$\footnote{The opertaion {\bf X = diag(x)} embeds a vector {\bf x} in the diagonal matrix {\bf X}.} controls the ESPAR beamforming, $\mathbf{u} = \left[1 \,\, 0 \ldots 0\right]^{T}$ is an $M \times 1$ vector and $v_{s}$ is the complex feeding at the active element \cite{13}. The radiation pattern of the ESPAR at an angle $\theta$ is thus given by $P(\theta) = \mathbf{i}^{T}\mathbf{a}(\theta)$, where $\mathbf{a}(\theta) = \left[a_{0}(\theta) \ldots a_{M-1}(\theta)\right]^{T}$ is the steering vector of the ESPAR at an angle $\theta$ given by \cite{13}-\cite{23}
\[a_{m}(\theta) = \left\{
  \begin{array}{lr}
   1, \,\,\,\,\,\,\,\,\,\,\,\, m = 0\\
    \mbox{exp}(jb \, \mbox{cos}(\theta-\theta_{m})), m = 1, ..., M-1
  \end{array}
\right.\]
where $\theta_{m} = 2\pi(m-1)/(M-1), \,\, m = 1,...,M-1, \,\, b = 2\pi\frac{d}{\lambda}$, and $d$ is the radius of the circular arrangement of parasitic antenna elements. The beamspace domain is a signal space where any radiation pattern can be represented as a point in this space. To represent the radiation pattern $P(\theta)$ in the beamspace domain, the steering vector $\mathbf{a}(\theta)$ is decomposed into a linear combination of a set of orthonormal basis patterns $\{\Phi_{i}(\theta)\}_{i=0}^{N-1}$ using Gram-Schmidt orthonormalization, where $N \leq M$ \cite{16}. It can be shown that the orthonormal basis patterns of the ESPAR (also known as the ADoF \cite{13}-\cite{23}) are equal to the number of parasitic elements (i.e., $N = M$). Therefore, the ESPAR radiation pattern in terms of the orthonormal basis patterns can be written as \cite{13}
\begin{equation}
\label{1}
P(\theta) = \sum_{n=0}^{M-1} w_{n} \Phi_{n}(\theta),
\end{equation}
where $w_{n} = \mathbf{i}^{T}\mathbf{q}_{n}$ are the weights assigned to the basis patterns and $\mathbf{q}_{n}$ is an $M \times 1$ vector of projections of all the steering vectors on $\Phi_{n}(\theta)$. Thus, the ESPAR radiation pattern is formed by manipulating the reactive loads attached to the parasitic elements.

\subsection{Orthonormal basis patterns as multiple virtual antennas}

\begin{figure}[t]
\centering
\includegraphics[width=3.5 in]{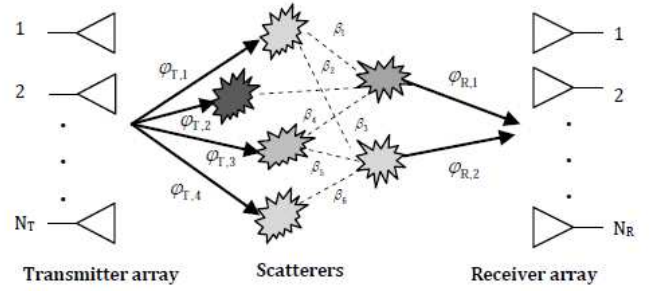}
\caption{Conventional MIMO spatial multipath channel.}
\label{fig_sim}
\end{figure}

\begin{figure}[t]
\centering
\includegraphics[width=3.5 in]{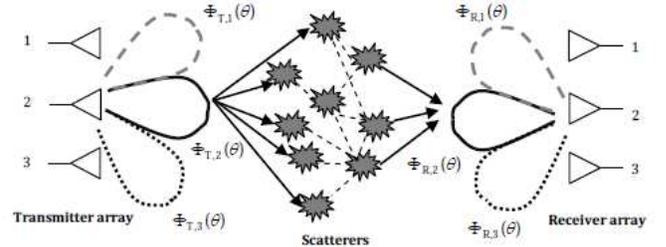}
\caption{Representation of a 3-antenna MIMO system in the beamspace domain.}
\label{fig_sim}
\end{figure}
In this section, we construct a generic channel model for the ESPAR antenna in the beamspace domain showing that the orthonormal basis patterns offer Aerial Degrees of Freedom (ADoF) and can act as multiple virtual antennas. Assume a conventional MIMO system with $N_{T}$ transmit antennas and $N_{R}$ receive antennas. The channel between the transmitter and the receiver is characterized by the presence of $Q$ scatterers as shown in Fig. 2. The $q^{th}$ scattering path is represented by a fading gain of $\beta_{q}$, and a unique pair of transmit and receive angles $\phi_{T,q}$ and $\phi_{R,q}$. In this case, the received signal vector ${\bf r} \in \mathbb{C}^{N_{R} \times 1}$ is given by
\[{\bf r} = {\bf H} \, {\bf s} + {\bf n},\]
where ${\bf H} \in \mathbb{C}^{N_{R} \times N_{T}}$ is the channel matrix representing the coupling between antenna elements, ${\bf s} \in \mathbb{C}^{N_{T} \times 1}$ is the transmitted signal vector, and ${\bf n} \in \mathbb{C}^{N_{R} \times 1}$ is the noise vector. The channel matrix can be formulated as \cite{12}
\begin{equation}
{\bf H} = \sum_{q=1}^{Q} \beta_{q} {\bf a_{R}}(\theta_{R,q}) {\bf a_{T}}^{H}(\theta_{T,q}) = {\bf A_{R}(\theta_{R})} {\bf H_{b}} {\bf A_{T}}^{H}{\bf (\theta_{T})},
\end{equation}
where ${\bf a_{R}}(\theta_{R,q})$ and ${\bf a_{T}}(\theta_{T,q})$ are the steering vectors of the transmitting and receiving arrays, ${\bf A_{R}(\theta_{R})}$ and ${\bf A_{T}(\theta_{T})}$ are the receive and transmit steering matrices where ${\bf A_{R}(\theta_{R})} = [{\bf a_{R}}(\theta_{R,1}),...,{\bf a_{R}}(\theta_{R,Q})]_{N_{R} \times Q}$ and ${\bf A_{T}(\theta_{T})} = [{\bf a_{T}}(\theta_{T,1}),...,{\bf a_{T}}(\theta_{T,Q})]_{N_{T} \times Q}$, and $H_{b} = \mbox{diag}(\beta_{1},...,\beta_{Q})$ is a $Q \times Q$ diagonal matrix. An equivalent model for the MIMO channel is the {\it virtual channel model} \cite{11x1}, which represents the coupling between virtual angular directions at the transmitter and the receiver. The channel matrix ${\bf H}$ in this model is given by (dropping the dependency on the angles for convenience) 
\begin{equation}
{\bf H} = {\bf \tilde{A}_{R} \, H_{v}} \, {\bf\tilde{A}_{T}}^{H},
\end{equation}
where $\tilde{A}_{R} \in \mathbb{C}^{N_{R} \times N_{R}}$ and $\tilde{A}_{T} \in \mathbb{C}^{N_{T} \times N_{T}}$ are the steering matrices corresponding to the virtual directions. Iff $\tilde{A}_{R}$ and $\tilde{A}_{T}$ are unitary, then the virtual channel matrix can be written as,
\begin{equation}
{\bf H_{v}} = {\bf \tilde{A}_{R}}^{H} {\bf A_{R} \, H_{b} \,} {\bf A_{T}}^{H} {\bf \tilde{A}_{T}}.
\end{equation}
The matrices ${\bf \tilde{A}_{R}}^{H} {\bf A_{R}}$ and ${\bf \tilde{A}_{T}}^{H} {\bf A_{T}}$ are the projections of ${\bf A_{R}}$ and ${\bf A_{T}}$ onto the square steering matrices corresponding to the virtual directions. It was shown in \cite{12} that ${\bf \tilde{A}_{R}}^{H} {\bf A_{R}}$ is a $Q \times N_{R}$ matrix containing vectors that represent the response of the orthonormal basis patterns to the scatterers. Thus, ${\bf \tilde{A}_{R}}^{H} {\bf A_{R}} = {\bf \Phi_{R}}$ and similarly, ${\bf \tilde{A}_{T}}^{H} {\bf A_{T}} = {\bf \Phi_{T}}$, where ${\bf \Phi_{R}}^{H} = [\Phi_{R,1}, \Phi_{R,2},...,\Phi_{R,N_{R}}] \in \mathbb{C}^{Q \times N_{R}}$ and ${\bf \Phi_{T}}^{H} = [\Phi_{T,1}, \Phi_{T,2},...,\Phi_{T,N_{T}}] \in \mathbb{C}^{Q \times N_{T}}$ contain the values of the $N_{R}$ and $N_{T}$ transmit and receive basis patterns towards the direction of the scatterers. The system input-output relationship can thus be written as
\begin{eqnarray*}
{\bf r} &=& {\bf H} \, {\bf s} + {\bf n}
\\ &=& {\bf \tilde{A}_{R} \, H_{v} \,} {\bf \tilde{A}_{T}}^{H} {\bf s} + {\bf n}
\\  &=& {\bf \tilde{A}_{R} \Phi_{R} \, H_{b} \,} {\bf \Phi_{T}}^{H} {\bf \tilde{A}_{T}}^{H} {\bf s} + {\bf n}.
\end{eqnarray*}
Assume that the data symbols are sent over diverse orthogonal beams rather than antenna elements, in this case the input to the antenna elements is not equal to the data symbols. The input to the antenna elements is related to the actual transmitted data symbols $s_{bs}$ via a precoding matrix, i.e.,  $s_{bs} = {\bf \tilde{A}_{T}}^{H} \, {\bf s}$ and the received signal perceived by the orthogonal beams at the receiver is given by $r_{bs} = {\bf \tilde{A}_{R}}^{H} \, {\bf r}$. Thus, an alternative implementation for MIMO transmission is to send and receive symbols in the beamspace domain, and the channel matrix in this case can be represented as \cite{18}
\begin{equation}
\label{5}
{\bf H_{bs}} = {\bf \Phi_{R}} \, {\bf H_{b}} \, {\bf \Phi_{T}}^{H}.
\end{equation}
The matrix ${\bf H_{bs}}$ represents the coupling between the orthonormal basis patterns of the transmitter antenna array and the corresponding patterns at the receiver. The ADoF, which denote the amount of diverse dimensions in beamspace on which independent streams can be loaded, is given by $\mbox{rank}({\bf H_{bs}})$. In Fig. 3, a 3 $\times$ 3 MIMO system is depicted. The conventional implementation of MIMO transmission is to load every antenna element with an independent data symbol. Alternatively, the implementation of MIMO in the beamspace domain is to load each of the transmit orthonormal basis patterns ($\Phi_{T,1}, \Phi_{T,2}, \Phi_{T,3}$) with an independent symbol by appropriate precoding and applying the techniques in \cite{23} to demultiplex the symbols perceived by the receive basis patterns.

Now, recalling the structure of the ESPAR antenna, we note that the concept of its operation is identical to beamspace MIMO. Although the ESPAR antenna has only one active element, it can provide DoF as it is capable of assigning independent weights to orthonormal patterns via adjusting reactive loads. Thus, an ESPAR antenna is analogous to MIMO systems in the beamspace domain, and we can indeed consider the set of orthonormal patterns as multiple virtual antennas.

\subsection{Underlay Cognitive Radio System Model}
\begin{figure}[t]
\centering
\includegraphics[width=3.5 in]{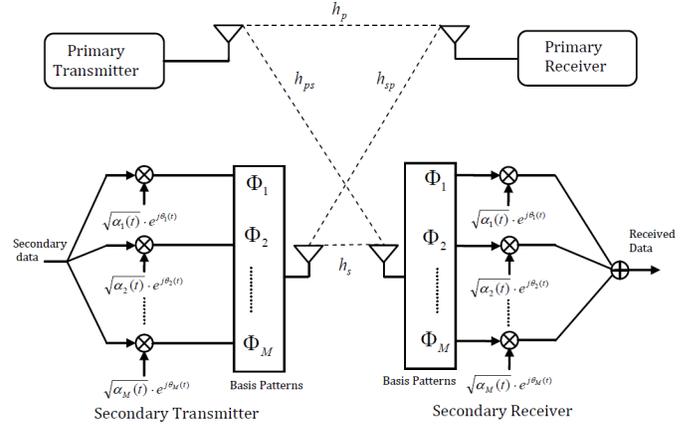}
\caption{Spectrum sharing system model.}
\label{fig_sim}
\end{figure}
Assume an underlay cognitive radio system where single-user primary and secondary systems coexist as shown in Fig. 4. The received signals at the $k^{th}$ time instant are given by
\begin{equation}
\label{2}
\begin{array}{lcl} \mbox{At PU-Rx:} \,\,\, r_{p}(k) = h_{p}(k) x_{p}(k) + h_{sp}(k) x_{s}(k) + n_{p}(k),\\ \mbox{At SU-Rx:} \,\,\, r_{s}(k) = h_{s}(k) x_{s}(k) + h_{ps}(k) x_{p}(k) + n_{s}(k), \end{array}
\end{equation}
 where $h_{p}(k)$ (PU-to-PU), $h_{ps}(k)$ (PU-to-SU), $h_{sp}(k)$ (SU-to-PU) and $h_{s}(k)$ (SU-to-SU) are the respective complex-valued overall channel gains (linear combination of channel gains at all basis patterns). The primary and secondary signals $x_{p}(k)$ and $x_{s}(k)$ are complex-valued symbols drawn from an $L$-ary constellation, while $n_{s}(k)$ and $n_{p}(k)$ are the AWGN samples with power spectral density $N_{o}$ at the secondary and primary receivers, respectively. The PU terminals are assumed to be equipped with single omnidirectional antennas while the antennas provided at the SU terminals are assumed to have $M$ orthonormal basis patterns having a complex weight vector $\mathbf{w} = \mathbf{i}^{T}\mathbf{q}$, where the $n^{th}$ weight $w_{n}$ has an arbitrary complex value $\sqrt{\alpha_{n}} e^{j \theta_{n}}$. The weight value depends on the setting of the parasitic elements reactive loads. This model can be reduced to the conventional single antenna (with no parasitic elements) by setting the weight vector to $\mathbf{w} = \left[1 \,\,\, 0 \ldots 0\right]$, in which case one ADoF is available, corresponding to the active antenna element. We also assume perfect knowledge of these channels at the SU transmitter and receiver. It is worth mentioning that our analysis fits any design for reconfigurable antennas with beamforming capabilities, and not only the ESPAR antenna\footnote{A comprehensive framework for single-radio reconfigurable antenna design and analysis can be found in \cite{23}.}.

\begin{figure}[!t]
\centering
\includegraphics[width=3.5 in]{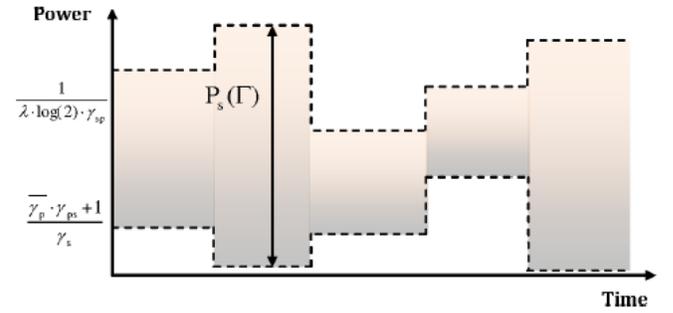}
\caption{Non-uniform water filling power allocation.}
\end{figure}

\section{The Exposed Secondary Users Problem}
We start with the conventional single antenna scheme with all parties employing single omnidirectional antennas. In this case, we set the weight vector for the model in Section II to $\mathbf{w} = \left[1 \,\,\, 0 \ldots 0\right]$.

\subsection{Ergodic Capacity Formulation}
The ergodic capacity maximization problem is a generalization for the problems in \cite{2}, \cite{3} and \cite{5}. We adopt a joint peak and average interference power constraints. In addition, we extend the work in \cite{2} and \cite{3} by considering the PU-to-SU as well as the SU-to-PU interference. The problem can be formulated as
\begin{align}
\label{3}
 C &= \max_{\mathbf{P_{s}}(\mathbf{\Gamma})} \mathbb{E}_{\mathbf{\Gamma}}\left\{\log_{2}\left(1+\frac{\gamma_{s}\mathbf{P_{s}}(\mathbf{\Gamma})}{\gamma_{ps}\overline{\gamma}_{p} + N_{o}}\right)\right\},\nonumber\\
&\mbox{subject to} \,\,\,\, \mathbb{E}_{\mathbf{\Gamma}} \left\{\gamma_{sp}\mathbf{P_{s}}(\mathbf{\Gamma})\right\} \leq Q_{av},\\
& \,\,\,\,\,\mbox{and} \,\,\,\, {\bf P_{s}}(\mathbf{\Gamma}) \gamma_{sp} \leq Q_{p},\nonumber
\end{align}
where $\mathbf{\Gamma} = (\gamma_{s},\gamma_{sp},\gamma_{ps}) = (|h_{s}|^{2},|h_{sp}|^{2},|h_{ps}|^{2})$, $Q_{av}$ and $Q_{p}$ are the average and peak interference power constraints, respectively, $\mathbb{E}\{.\}$ is the expectation operator, and ${\bf P_{s}}(\mathbf{\Gamma})$ is the SU transmit power allocation as a function of the CSI vector $\mathbf{\Gamma}$. Without loss of generality, we set the noise spectral density $N_{o}$ = 1 W/Hz. Thus the SNR of any link is equal to the transmit power of the PU/SU transmitter multiplied by the channel power. The PU is assumed to transmit with a constant power of $\overline{\gamma}_{p}$. The optimization problem in (\ref{3}) can be easily solved using the Lagrangian method \cite{3}. We define the Lagrangian function $\Xi(\mathbf{P_{s}}(\mathbf{\Gamma}), \lambda)$ as
\[\Xi(\mathbf{P_{s}}(\mathbf{\Gamma}), \lambda) = \mathbb{E}_{\mathbf{\Gamma}}\left\{\log_{2}\left(1+\frac{\gamma_{s}\mathbf{P_{s}}(\mathbf{\Gamma})}{\gamma_{ps}\overline{\gamma}_{p} + 1}\right)\right\} \]
\[
- \lambda \left(\mathbb{E}_{\mathbf{\Gamma}} \left\{\gamma_{sp}\mathbf{P_{s}}(\mathbf{\Gamma})\right\} - Q_{av}\right),
 \]
where $\lambda$ is the Lagrange multiplier. The solution for the optimal power allocation is obtained by maximizing the Lagrangian function as
\[
\frac{\partial \, \Xi(\mathbf{P_{s}}(\mathbf{\Gamma}), \lambda)}{\partial \mathbf{P_{s}}(\mathbf{\Gamma})} = 0,
\]
 yielding $\mathbf{P_{s}}(\mathbf{\Gamma}) = \frac{1}{\lambda \gamma_{sp} \log(2)}-\frac{\gamma_{ps}\overline{\gamma}_{p}+1}{\gamma_{s}}$. In order to satisfy the peak constraint ${\bf P_{s}}(\mathbf{\Gamma}) \gamma_{sp} \leq Q_{p}$, and the constraint ${\bf P_{s}}(\mathbf{\Gamma}) \geq 0$, we have
\begin{equation}
\label{4}
{\bf P_{s}}(\mathbf{\Gamma}) = \left\{
  \begin{array}{lr}
   \frac{Q_{p}}{\gamma_{sp}}, \,\,\,\,\,\,\,\,\,\,\,\, \frac{\gamma_{s}}{\gamma_{sp}} \leq \frac{(1+\gamma_{ps}\overline{\gamma}_{p})}{\frac{1}{\lambda \log(2)}-Q_{p}}\\
    \frac{1}{\lambda \gamma_{sp} \log(2)}-\frac{\gamma_{ps}\overline{\gamma}_{p}+1}{\gamma_{s}},\\ \,\,\,\,\,\,\,\,\,\,\,\,\,\,\,\,\,\,\,\,\,\,\,\, \frac{(1+\gamma_{ps}\overline{\gamma}_{p})}{\frac{1}{\lambda \log(2)}-Q_{p}} \leq \frac{\gamma_{s}}{\gamma_{sp}} \leq \lambda \log(2) (1+\gamma_{ps}\overline{\gamma}_{p}) \\
		0, \,\,\,\,\,\,\,\, \frac{\gamma_{s}}{\gamma_{sp}} \geq \lambda \log(2) (1+\gamma_{ps}\overline{\gamma}_{p})
  \end{array}
\right.
\end{equation}
where the Lagrange multiplier $\lambda$ is obtained numerically to satisfy the constraints in (\ref{3}). Note that the peak interference constraint $Q_{p}$ must be greater than the average interference constraint $Q_{av}$. Thus, the peak constraint has an impact on SU capacity only if $Q_{p} > \frac{1}{\lambda \log(2)}$. If the average constraint is very large (i.e., $\lambda$ is very small), then defining a peak interference constraint is meaningless.
By relaxing the peak interference constraint, we obtain the oprimal power allocation as ${\bf P_{s}}(\mathbf{\Gamma}) = \left\{\frac{1}{\lambda \gamma_{sp} \log(2)}-\frac{\gamma_{ps}\overline{\gamma}_{p}+1}{\gamma_{s}}\right\}^{+},$ where $\{x\}^{+} = \max\{x,0\}$. This power allocation policy can be graphically interpreted as water filling with a non-uniform (time varying) water level as shown in Fig. 5. The water level depends on the SU-to-PU fading channel and is quantified by the term $\frac{1}{\lambda \gamma_{sp} \log(2)}$. The allocated power at each time instant is the difference between the {\it water level} $\frac{1}{\lambda \gamma_{sp} \log(2)}$ and the {\it water depth} $\frac{\gamma_{ps}\overline{\gamma}_{p}+1}{\gamma_{s}}$. The optimal power allocation policy is sensitive to the $\gamma_{s}$, $\gamma_{sp}$, and $\gamma_{ps}$ channels statistics as follows:
\begin{itemize}
  \item The water level is controlled by the reciprocal of $\gamma_{sp}$. If $\gamma_{sp}$ is not severely faded, then the problem reduces to the conventional constant water level power allocation policy. If $\gamma_{sp}$ is severely faded, the water level fluctuates over time allowing for larger amount of power to be allocated.
  \item The water depth is controlled by both $\gamma_{ps}$ and the reciprocal of $\gamma_{s}$. If $\gamma_{s}$ is severely faded, then the water depth is relatively small because $\frac{1}{\gamma_{s}}$ is likely to have large values most of the time. Contrarily, if $\gamma_{ps}$ is severely faded, the water depth is relatively large because $\gamma_{ps}$ is likely to have small values.
\end{itemize}

Based on (\ref{4}), the ergodic capacity is given in (\ref{5}), where $z = \frac{\gamma_{s}}{\gamma_{sp}}$ and $f_{z}(z)$ is the probability density function (pdf) of $z$. The random variable $z$ jointly describes the water level and depth, i.e., it is proportional to the ratio between both quantities. When $\gamma_{s}$ is severely faded and $\gamma_{sp}$ has a LoS component, the random variable $z$ will have an exponentially decaying tail, and it will be highly likely that $z$ has a small value. On the other hand, when $\gamma_{sp}$ is severely faded and $\gamma_{s}$ has a LoS component, the random variable $z$ will have a heavier tail and it will be highly likely that $z$ will have large values, which corresponds to large allocated power.

In the following subsections, we investigate the SU capacity when the SUs are exposed to the PU via a direct LoS channel. We highlight the negative impact of LoS interference on the SU capacity and show that LoS interference hinders the capacity gain obtained by opportunistic power allocation. In the following analysis, an $X-Y$ scenario describes the one where the SU-to-SU channel is subject to a fading distribution $Y$ and the mutual interference channels (SU-to-PU and PU-to-SU channels) are subject to fading distribution $X$. In each scenario, we aim at identifying the behavior of the interference channels and the power allocation opportunities offered to the SU transmitter.

\begin{figure*}[!t]
\normalsize
\setcounter{mytempeqncnt}{\value{equation}}
\setcounter{equation}{8}
\begin{equation}
\label{5}
C = \mathbb{E}_{\gamma_{ps}} \left\{ \int_{z = \lambda \log(2) (1+\gamma_{ps}\overline{\gamma}_{p})}^{\frac{(1+\gamma_{ps}\overline{\gamma}_{p})}{\left\{\frac{1}{\lambda \log(2)}-Q_{p}\right\}^{+}}} \log_{2}\left(\frac{z}{\lambda \log(2)(\gamma_{ps}\overline{\gamma}_{p}+1)}\right) f_{z}(z) dz + \int_{z = \frac{(1+\gamma_{ps}\overline{\gamma}_{p})}{\left\{\frac{1}{\lambda \log(2)}-Q_{p}\right\}^{+}}}^{\infty} \log_{2}\left(1+\frac{Q_{p}\,z}{(\gamma_{ps}\overline{\gamma}_{p}+1)}\right) f_{z}(z) dz \right\}.
\end{equation}
\setcounter{equation}{\value{mytempeqncnt}+1}
\hrulefill
\vspace*{4pt}
\end{figure*}

\subsection{The Rician-Rician Scenario}
In this scenario, all the channels have LoS components. In other words, both the SU-to-SU and the mutual interference channels are not severely faded. We adopt a Rician model for the SU-to-SU, SU-to-PU, and PU-to-SU channels as follows:
\begin{equation}
\label{6}
h_{i}(k) = \sqrt{\overline{\gamma}_{i}}\left(\sqrt{\frac{K_{i}}{K_{i}+1}} e^{j \phi_{i}} + v_{i}(k)\right)
\end{equation}
where $h_{i}(k) \in \{h_{s}(k),h_{sp}(k),h_{ps}(k)\}$, $K_{i} \in \{K_{s},K_{sp},K_{ps}\}$, $v_{i}(k) \in \{v_{s}(k),v_{sp}(k),v_{ps}(k)\}$, and $\phi_{i} \in \{\phi_{s},\phi_{sp},\phi_{ps}\}$. The parameters $K_{s}$, $K_{sp}$, and $K_{ps}$ are the $K$-factors (specular components) of the $h_{s}$, $h_{sp}$, and $h_{ps}$ channels, respectively. The components $v_{s}(k)$, $v_{sp}(k)$, and $v_{ps}(k)$ are the diffused components where $\left(v_{s}(k),v_{sp}(k),v_{ps}(k)\right) \sim \mathcal{CN}\left(0,\left(\frac{1}{K_{s}+1}, \frac{1}{K_{sp}+1}, \frac{1}{K_{ps}+1}\right)\right)$, while ${\phi_{s}, \phi_{sp}, \phi_{ps}}$ are the constant phases of the LoS components. Non-identical Rician channels with average channel powers of $(\overline{\gamma}_{s},\overline{\gamma}_{ps},\overline{\gamma}_{sp})$ are assumed for the channels $(h_{s},h_{ps},h_{sp})$. The pdf of the channel power $\gamma_{i} = |h_{i}|^{2}$ when $|h_{i}|$ follows a Rician distribution is given by \cite{19}
\begin{equation}
\label{7}
f_{\gamma_{i}}(\gamma_{i}) = \frac{1+K_{i}}{\overline{\gamma}_{i}} e^{-K_{i}-\frac{(1+K_{i})}{\overline{\gamma}_{i}} \gamma_{i}} I_{o}\left(2 \sqrt{\frac{K_{i}(1+K_{i})}{\overline{\gamma}_{i}}\gamma_{i}}\right)
\end{equation}
where $I_{o}(.)$ is the modified Bessel function of the first kind. In the Rician-Rician scenario, both $\gamma_{sp}$ and $\gamma_{ps}$ follow the distribution in (\ref{7}). The dynamic range of the interference channels ($|h_{sp}|$,$|h_{ps}|$) fluctuations can be expressed by the variance ($\sigma^{2}_{sp,ps}$) of the Rician distribution \cite{19}
\[ \sigma^{2}_{sp,ps} =\]
\[ 2\frac{\overline{\gamma}_{sp,ps}}{K_{sp,ps}+1} + \frac{K_{sp,ps}}{K_{sp,ps}+1} - \frac{\pi \overline{\gamma}_{sp,ps}}{2(K_{sp,ps}+1)} L_{1/2}^{2}\left(\frac{-K_{sp,ps}}{2\overline{\gamma}_{sp,ps}}\right),\]
where $L_{1/2}^{2}(.)$ is the square of the Laguerre polynomial. An interesting scenario is when the $K$-factor tends to $\infty$. Given that the limit of the Laguerre polynomial is $\lim_{x \to -\infty} L_{v}(x) = \frac{|x|^{v}}{\Gamma(v+1)}$, where $\Gamma(.)$ is the gamma function \cite{19}, it can be easily shown that the variance of the interference channels $\sigma^{2}_{sp,ps}$ tends to 0 when $K_{sp,ps} \to \infty$. Thus, a dominant LoS interference signal makes the interference channel almost deterministic. Therefore, for the Rician-Rician scenario with $K_{s,sp,ps} \to \infty$, the pdfs of $z$ and $\gamma_{ps}$ tend to
\[f_{z}(z) \to \delta\left(z-\frac{\overline{\gamma}_{s}}{\overline{\gamma}_{sp}}\right), \,\,
\mbox{and} \,\,
f_{\gamma_{ps}}(\gamma_{ps}) \to \delta(\gamma_{ps}-\overline{\gamma}_{ps}).\]
The ergodic SU capacity is obtained by plugging $f_{z}(z)$ and $f_{\gamma_{ps}}(\gamma_{ps})$ in (\ref{5}), leading to the same capacity of the AWGN interference channel under receive power constraint as follows
\[\lim_{K_{sp,ps} \to \infty} C = \log_{2}\left(1+\frac{\frac{Q_{av}}{\overline{\gamma}_{sp}} \overline{\gamma}_{s}}{1+\overline{\gamma}_{p} \overline{\gamma}_{ps}}\right),\]
which is the SU capacity when all channels are AWGN with no fading. Based on this analysis, we conclude the following remark.\\
{\bf Remark 1:} Because no SU transmit power constraint is imposed, the amount of average power transmitted by the SU is controlled by the statistics of $\gamma_{sp}$ and $\gamma_{ps}$. If both channels are not severely faded, the SU transmits with an almost constant power level, which can be interpreted as an equivalent transmit power constraint whose value is decided by $\overline{\gamma}_{sp}$ and $\overline{\gamma}_{ps}$. This means that non-severely faded interference channels hinder the opportunistic behavior of the SU. Furthermore, the existence of a LoS component in the SU-to-SU channel makes the SU capacity reduce to the deterministic AWGN capacity.

\subsection{The Rician-Rayleigh Scenario}
In this scenario, the SU-to-SU channel is characterized by a large dynamic range (large variance) and follows a Rayleigh distribution, while the interference channels have a LoS component and follow a Rician distribution. The random variable $z = \frac{\gamma_{s}}{\gamma_{sp}}$ is the ratio between the squares of a Rayleigh and a Rician random variable. In order to obtain the capacity, we first rewrite the pdf of $\gamma_{sp}$ in (\ref{7}) in terms of the Meijer-$G$ function $\MeijerG{m}{n}{p}{q}{a_1,\ldots,a_p}{b_1,\ldots,b_q}{z}$ [26, Sec. 7.8] as
\[f_{\gamma_{sp}}(\gamma_{sp}) =\]
\begin{equation}
\label{8}
 \frac{1+K_{sp}}{\overline{\gamma}_{sp}} e^{-K_{sp}-\frac{(1+K_{sp})\gamma_{sp}}{\overline{\gamma}_{sp}}} \MeijerG{1}{0}{0}{2}{-}{0, 0}{\frac{K_{sp} (1+K_{sp}) \gamma_{sp}}{\overline{\gamma}_{sp}}}.
\end{equation}
Noting that $\gamma_{s}$ follows an exponential distribution with $f_{\gamma_{s}}(\gamma_{s}) = \frac{1}{\overline{\gamma}_{s}} e^{-\frac{\gamma_{s}}{\overline{\gamma}_{s}}}$ and that the pdf of $Z = \frac{X}{Y}$ is given by $f_{Z}(z) = \int_{-\infty}^{+\infty}|y| p_{x,y}(zy, y) dy$ \cite{19}, one can obtain the expression in (\ref{9}) for the pdf of $z$. Using the property $z \MeijerG{m}{n}{p}{q}{a_1,\ldots,a_p}{b_1,\ldots,b_q}{z} = \MeijerG{m}{n}{p}{q}{a_1+1,\ldots,a_p+1}{b_1+1,\ldots,b_q+1}{z}$, the integral can be evaluated as the standard laplace transform of a Meijer-$G$ function \cite{20} yielding the result in (16). We notice that when the $K$-factors $\to\infty$, the pdfs of interest tend to
\[\lim_{K_{sp} \to \infty} f_{z}(z) = \frac{\overline{\gamma}_{sp}}{\overline{\gamma}_{s}} e^{\frac{-\overline{\gamma}_{sp}z}{\overline{\gamma}_{s}}} ,  \,\, \lim_{K_{ps} \to \infty} f_{\gamma_{ps}}(\gamma_{ps}) = \delta(\gamma_{ps} - \overline{\gamma}_{ps}).\]
Thus, the ergodic capacity of the SU is obtained by plugging the pdfs $f_{z}(z)$ and $f_{\gamma_{ps}}(\gamma_{ps})$ in (\ref{5}). Unlike the Rician-Rician scenario, the variable $z$ is not deterministic at large values of $K_{sp}$. Instead, $z$ is exponentially distributed, which implies that $f_{z}(z)$ has an exponentially-bounded tail. Because a pdf with an exponentially-decaying tail gives small weight to large values of $z$, the capacity of the SU is still limited as it depends on the integral $\int \log(z)f_{z}(z) dz$. We quantify the SU capacity of this scenario in the following lemma.

\begin{lem}
{\it For asymptotically large $K$-factors and infinite peak interference constraint, the SU capacity in the Rician-Rayleigh Scenario is bounded by
\[C \to {\rm Ei} \left(-\lambda \log(2) (\gamma_{ps} \overline{\gamma}_{p}+1) \frac{\overline{\gamma}_{sp}}{\overline{\gamma}_{s}}\right).\]
where ${\rm Ei}(.)$ is the exponential integral function. At large SNR for the SU-to-SU channel ($\overline{\gamma}_{s} \to \infty$), the SU capacity tends to
\[C  \lesssim \log_{2}\left(\frac{\overline{\gamma}_{s}}{\lambda_{AWGN} \log(2) \overline{\gamma}_{sp} (\overline{\gamma}_{p} \gamma_{p} + 1)}\right) - \gamma,\]
where $\lesssim$ is the asymptotic inequality and $\gamma$ is the Euler-Mascheroni constant. At low SU-to-SU SNR ($\overline{\gamma}_{s} \to 0$), the SU capacity tends to
\[C \gtrsim \frac{\overline{\gamma}_{s} \exp\left(-\frac{\lambda_{AWGN} \log(2) \overline{\gamma}_{sp} (\overline{\gamma}_{p} \gamma_{p} + 1)}{\overline{\gamma}_{s}}\right)}{\lambda_{AWGN} \log(2) \overline{\gamma}_{sp} (\overline{\gamma}_{p} \gamma_{p} + 1)},\]
where $\lambda_{AWGN}$ is the lagrange multiplier for the AWGN channel.
}
\end{lem}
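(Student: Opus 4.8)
The plan is to evaluate the ergodic-capacity integral~(\ref{5}) directly in the joint limit $K_{sp},K_{ps}\to\infty$ and $Q_{p}\to\infty$, and then read off the two SNR asymptotics from the behaviour of the resulting exponential-integral function. First I would relax the peak constraint: as $Q_{p}\to\infty$ we have $\{1/(\lambda\log 2)-Q_{p}\}^{+}=0$, so the upper limit of the first integral in~(\ref{5}) and the lower limit of the second both diverge, and the second integral vanishes. Substituting the limiting densities established above, $f_{z}(z)\to(\overline{\gamma}_{sp}/\overline{\gamma}_{s})e^{-\overline{\gamma}_{sp}z/\overline{\gamma}_{s}}$ and $f_{\gamma_{ps}}(\gamma_{ps})\to\delta(\gamma_{ps}-\overline{\gamma}_{ps})$, collapses the expectation over $\gamma_{ps}$ and leaves $C=\mu\int_{a}^{\infty}\log_{2}(z/a)\,e^{-\mu z}\,dz$ with $\mu=\overline{\gamma}_{sp}/\overline{\gamma}_{s}$ and $a=\lambda\log(2)(1+\overline{\gamma}_{ps}\overline{\gamma}_{p})$.

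Next I would substitute $u=z/a$ and set $t=\mu a$, reducing the integral to $C=\frac{t}{\log 2}\int_{1}^{\infty}\ln(u)\,e^{-tu}\,du$. One integration by parts (differentiating $\ln u$, integrating $e^{-tu}$) kills the boundary term and gives $\frac{1}{\log 2}\int_{1}^{\infty}u^{-1}e^{-tu}\,du=E_{1}(t)/\log 2=-{\rm Ei}(-t)/\log 2$ with $t=\lambda\log(2)(\overline{\gamma}_{ps}\overline{\gamma}_{p}+1)\,\overline{\gamma}_{sp}/\overline{\gamma}_{s}$, which is (up to the $1/\log 2$ nats-to-bits factor) the exponential-integral form asserted in the lemma.

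For the high-SNR regime $\overline{\gamma}_{s}\to\infty$ I would note that in~(\ref{4}) the water-depth term $(\gamma_{ps}\overline{\gamma}_{p}+1)/\gamma_{s}$ becomes negligible, so the power policy reduces to the constant level $1/(\lambda\overline{\gamma}_{sp}\log 2)$ and the average-interference constraint forces $\lambda\to\lambda_{AWGN}$; hence $t\to 0$, and the small-argument series $E_{1}(t)=-\gamma-\ln t+O(t)$ gives $C\to\log_{2}(1/t)-\gamma/\log 2$, i.e. the stated logarithmic growth in $\overline{\gamma}_{s}$ up to an additive constant. For the low-SNR regime $\overline{\gamma}_{s}\to 0$ I would instead use the large-argument expansion $E_{1}(t)\sim e^{-t}/t$ as $t\to\infty$, which reproduces the claimed exponentially small expression.

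The substitution, the integration by parts, and the exponential-integral asymptotics are all standard. The step requiring the most care is the behaviour of the Lagrange multiplier $\lambda$, which is defined only implicitly through $\mathbb{E}_{\mathbf{\Gamma}}\{\gamma_{sp}\mathbf{P_{s}}(\mathbf{\Gamma})\}=Q_{av}$: I would need to use dominated convergence on the power policy~(\ref{4}) together with the limiting density of $z$ to show that $\lambda$ converges to its AWGN value as $\overline{\gamma}_{s}\to\infty$ (and to characterise its limit as $\overline{\gamma}_{s}\to 0$), and to justify that the $K\to\infty$ and $Q_{p}\to\infty$ limits may be interchanged with the expectation and the integral.
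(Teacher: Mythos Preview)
Your proposal follows essentially the same route as the paper: relax $Q_{p}$, collapse $\gamma_{ps}$ to its deterministic limit, plug in the exponential density for $z$, reduce to $\int_{1}^{\infty}\log u\,e^{-tu}\,du$, and read off the two regimes from the standard small/large-argument behaviour of the exponential integral. The one substantive difference is your handling of the Lagrange multiplier. You plan to argue that $\lambda\to\lambda_{AWGN}$ (via dominated convergence on the power policy), which would take some work and would in fact yield asymptotic \emph{equality} rather than the one-sided $\lesssim$ and $\gtrsim$ asserted in the lemma. The paper instead writes down an explicit expression for $\lambda$ from the average-interference constraint, compares it with the corresponding AWGN expression, and uses Jensen's inequality ($\mathbb{E}\{1/z\}\geq 1/\mathbb{E}\{z\}$) to conclude $\lambda_{AWGN}\leq\lambda$; since $E_{1}$ is decreasing this immediately gives the stated inequality directions in both SNR regimes without any limiting argument on $\lambda$. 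That Jensen step is cleaner and is what actually produces the asymptotic \emph{inequalities} in the statement; everything else in your outline matches the paper.
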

\begin{proof} See Appendix A. \IEEEQEDhere
\end{proof}
Note that the SU capacity in terms of the Lagrangian multiplier when all the channels are AWGN is given by $C_{AWGN} = \log_{2}\left(\frac{\overline{\gamma}_{s}}{\lambda_{AWGN} \log(2) \overline{\gamma}_{sp} (\overline{\gamma}_{p} \gamma_{p} + 1)}\right)$. Thus, the results of Lemma 1 suggest the following:
\begin{itemize}
  \item For an asymptotically large $\overline{\gamma}_{s}$, the SU capacity in the Rician-Rayleigh scenario is less than the AWGN capacity by a constant gap that is equal to the {\it Euler-Mascheroni} constant $\gamma \approx 0.577$ bits/sec/Hz.
  \item For asymptotically small $\overline{\gamma}_{s}$, the SU capacity in the Rician-Rayleigh scenario is greater than the AWGN capacity as it can be shown that $\frac{1}{x} e^{-x} > \log(\frac{1}{x}), \forall x > 0$.
\end{itemize}
These conclusions lead to the following remark.\\

{\bf Remark 2:} Severe fading in the SU-to-SU link is generally a source of {\it unreliability}. In other words, SU-to-SU channel fluctuations results in an unreliable channel. The only way to exploit such fluctuations is to optimally allocate power based on the SU transmitter CSI at low SNR, where the capacity is inherently power limited. This corresponds to the conventional point-to-point fading channel with CSI at the transmitter analyzed by Goldsmith and Varaiya in \cite{21}. Moreover, the LoS components in the interference channels hinder any opportunities to improve the capacity, and the optimal power allocation is almost oblivious to the interference channel fluctuations.

\begin{figure*}[!t]
\normalsize
\setcounter{mytempeqncnt}{\value{equation}}
\setcounter{equation}{12}
\begin{align}
\label{9}
f_{z}(z) &= \frac{1+K_{sp}}{\overline{\gamma}_{sp} \overline{\gamma}_{s}} \,\,\, e^{-K_{sp}} \int_{0}^{\infty} \gamma_{sp} e^{-\left(\frac{(1+K_{sp})}{\overline{\gamma}_{sp}}+\frac{z}{ \overline{\gamma}_{s}}\right) \gamma_{sp}} \MeijerG{1}{0}{0}{2}{-}{0, 0}{\frac{K_{sp} (1+K_{sp})\gamma_{sp}}{\overline{\gamma}_{sp}}} d \gamma_{sp} \\
 &= \frac{2 e^{-K_{sp}}}{K_{sp} \overline{\gamma}_{s}} \int_{0}^{\infty} e^{-\left(\frac{(1+K_{sp})}{\overline{\gamma}_{sp}}+\frac{z}{ \overline{\gamma}_{s}}\right) \gamma_{sp}} \MeijerG{1}{0}{0}{2}{-}{1, 1}{\frac{K_{sp} (1+K_{sp})\gamma_{sp}}{\overline{\gamma}_{sp}}} d \gamma_{sp} \\
&= \frac{2 e^{-K_{sp}}}{K_{sp} \overline{\gamma}_{s}} \times \frac{1}{\frac{1+K_{sp}}{\overline{\gamma}_{sp}}+\frac{z}{\overline{\gamma}_{s}}} \times \MeijerG{1}{1}{1}{2}{0}{1, 1}{\frac{K_{sp} (1+K_{sp})}{\overline{\gamma}_{sp}\left(\frac{(1+K_{sp})}{\overline{\gamma}_{sp}}+\frac{z}{\overline{\gamma}_{s}}\right)}} \\
&= \frac{(1+K_{sp}) \overline{\gamma}_{sp}}{\overline{\gamma}_{s}} \,\,\, e^{-K_{sp}+\frac{K_{sp}(1+K_{sp})}{(1+K_{sp})+\frac{\overline{\gamma}_{sp}}{\overline{\gamma}_{s}}z}} \left(\frac{(1+K_{sp})^{2}+z \frac{\overline{\gamma}_{sp}}{\overline{\gamma}_{s}}}{\left((1+K_{sp})+z \frac{\overline{\gamma}_{sp}}{\overline{\gamma}_{s}}\right)^{3}}\right).
\end{align}
\setcounter{equation}{\value{mytempeqncnt}+4}
\hrulefill
\vspace*{4pt}
\end{figure*}

\subsection{The Rayleigh-Rayleigh Scenario}
In this scenario, all channels are severely faded with drastic fluctuations over time. Because $h_{sp}$ does not have a LoS component, the pdf of $z$ can be obtained by setting $K_{sp}$ to 0 in (16) yielding
\[f_{z}(z) = \frac{\overline{\gamma}_{sp}}{\overline{\gamma}_{s}\left(1+z \frac{\overline{\gamma}_{sp}}{\overline{\gamma}_{s}}\right)^{2}},\]
which is the {\it log-logistic} distribution. It can be shown that the log-logistic distribution is a {\it fat-tailed distribution} by showing that $P[Z>z] \sim z^{-\alpha}$ as $z \to \infty$, where $\alpha > 0$ and $\sim$ is the asymptotic equivalence. Thus, the pdf of the variable $z$ has a heavier right-tail in the Rayleigh-Rayleigh scenario compared to the Rician-Rician and Rician-Rayleigh scenarios, which means that $\int \log(z)f_{z}(z)$ in (\ref{5}) will be larger in the Rayleigh-Rayleigh scenario as $f_{z}(z)$ is slowly decaying and will give significantly larger weights to larger values of $\log(z)$ (note that $\log(z)$ is a monotonically increasing function of $z$). On the other hand, $\gamma_{ps}$ still has an exponentially bounded tail. Again, this contributes to the SU capacity enhancement as $C$ depends on $-\int_{\gamma_{ps}=0}^{\infty}\log(\gamma_{ps}+v)f_{\gamma_{ps}}(\gamma_{ps})$ in (\ref{5}), where $v$ is a constant. Thus, severely faded interference channels offer SU capacity gain even if the SU-to-SU channel is severely faded as well. The SU opportunistic behavior can be quantified by the pdf tails for $z$ and $\gamma_{ps}$. The heavier the tail of $f_{z}(z)$ and the faster the pdf of $\gamma_{ps}$ decays, the larger is the SU capacity. Recalling the water filling power allocation, the heavy tail of $z$ means that the ratio between the water level and depth is likely to have a large value (i.e., large allocated power).

\subsection{The Rayleigh-Rician Scenario}
In this scenario, the SU-to-SU channel is a reliable LoS channel, while the interference channels are severely faded. Assuming that the SU-to-SU $K$-factor is asymptotically large, we can approximate the pdf of $\gamma_{s}$ as $f_{\gamma_{s}}(\gamma_{s}) \approx \delta(\gamma_{s}-\overline{\gamma}_{s})$. Each of the interference channels $\gamma_{sp}$ and $\gamma_{ps}$ follows an exponential distribution. Thus, the variable $z = \frac{\gamma_{s}}{\gamma_{sp}}$ is the reciprocal of an exponential random variable. The pdf of $z$ will be given by a simple random variable transformation as
\[\lim_{K_{s} \to \infty} f_{z}(z) = \frac{\overline{\gamma}_{s}}{\overline{\gamma}_{sp} \, z^{2}} e^{\frac{-\overline{\gamma}_{s}}{\overline{\gamma}_{sp} \, z}}.\]
Note that $f_{z}(z) \sim z^{-2}$ when $z \to \infty$, which means that the distribution of $z$ is a fat-tailed distribution. The distribution of $\gamma_{ps}$ has an exponentially distributed tail as $h_{ps}$ follows a Rayleigh distribution. These results lead to the following remark.\\

{\bf Remark 3:} When the interference channels are severely faded, and the SU-to-SU channel has a LoS component with an asymptotically large $K$-factor, $z$ is characteized by fat-tailed distribution and $\gamma_{ps}$ follows an exponential distribution with an exponentially decaying tail. Thus, the water level to depth ratio is likely to have large values over time allowing for opportunistic power allocation at the SU transmitter.

Combining remarks 1, 2, and 3, we reach the following conclusion. The capacity of an underlay cognitive system with interference constraints at the PU receiver depends on the statistics of the PU-to-SU and SU-to-PU interference channels. When the interference channels are severely faded, the water filling power allocation can offer a significant capacity gain compared to the AWGN scenario. However, when the SUs are {\it exposed} to the PUs via LoS channels, the SU transmitter finds limited power allocation opportunities and the capacity of such system becomes similar to the AWGN capacity for asymptotically large $K$-factors. The impact of the SU-to-SU channel statistics is similar to conventional point-to-point fading channels, where fading is generally considered a source of unreliability. However, at low SU-to-SU channel SNR, such fluctuations can be exploited to improve capacity when optimal power allocation is applied.
		
\section{Random Aerial Beamforming using Dumb Basis Patterns}
Motivated by the analysis in the previous section, we propose a technique that can eliminate the impact of LoS interference by improving the dynamic range of interference channels fluctuations. This is achieved by RAB, which intentionally induces artificial fluctuations in these channels by randomizing the complex weights assigned to the basis patterns of an ESPAR antenna. Throughout this section, we adopt the system model presented in Section II, with an ESPAR antenna weight vector of $\mathbf{w} = \mathbf{i}^{T}\mathbf{q} = \left[\begin{array}{c} \sqrt{\alpha_{1}} e^{j \theta_{1}} \,\, \ldots \,\, \sqrt{\alpha_{M}} e^{j \theta_{M}}\end{array}\right]$.

\subsection{Random Aerial Beamforming (RAB)}
By viewing the ADoF provided by the orthonormal basis patterns as {\it virtual dumb antennas} or {\it dumb basis patterns}, we adjust the reactive loads of the parasitic antenna elements such that the weights assigned to the basis patterns are randomly varied over time. Based on the system model presented in Section II, an ESPAR antenna with $M-1$ parasitic elements employed at the transmitter and the receiver has an input-output relationship given by ${\bf r_{bs}} = {\bf H_{bs}} \, {\bf s_{bs}} + {\bf n},$ where ${\bf s_{bs}} \in \mathbb{C}^{M \times 1}$ is the set of weights assigned to the basis patterns obtained by altering the reactive loads, ${\bf r_{bs}}$ is the signal vector received at the beamspace domain, and ${\bf n}$ is a noise vector. The received beamspace vector can be further reduced to
\begin{equation}
\label{33}
{\bf r_{bs}} = {\bf \Phi_{R}} \, {\bf H_{b}} \, {\bf \Phi_{T}}^{H} \, {\bf s_{bs}} + {\bf n},
\end{equation}
where ${\bf \Phi_{R}}^{H} = [\Phi_{R,1}, \Phi_{R,2},...,\Phi_{R,M}] \in \mathbb{C}^{Q \times M}$ and ${\bf \Phi_{T}}^{H} = [\Phi_{T,1}, \Phi_{T,2},...,\Phi_{T,M}] \in \mathbb{C}^{Q \times M}$ are the responses of the transmit and receive basis patterns towards the $Q$ scatterers. If a LoS component with a specific $K$-factor exists between the transmitter and the receiver, the channel model reduces to
\begin{equation}
\label{35}
{\bf r_{bs}} = \left(\sqrt{\frac{K}{K+1}} {\bf \overline{H}_{bs}} + \frac{1}{\sqrt{K+1}} {\bf \Phi_{R}} \, {\bf H_{b}} \, {\bf \Phi_{T}}^{H} \right) \, {\bf s_{bs}} + {\bf n},
\end{equation}
where ${\bf \overline{H}_{bs}}$ is the beamspace deterministic LoS channel matrix with entries $[\sqrt{\overline{\gamma}} \, e^{j \phi^{k,m}}]$, where $k=0,1,..., M-1$, $m = 0, 1, ..., M-1$, and $\overline{\gamma}$ is the average SNR. The entries of ${\bf \overline{H}_{bs}}$ denote the deterministic phase shifts of the LoS components at various basis patterns. The implementation of RAB in the SU receiver is different from its implementation at the transmitter as explained hereunder.

\subsubsection{RAB at the SU transmitter}
The goal of applying RAB at the SU transmitter is to induce artificial fluctuations in $h_{sp}$. Assume a SU transmitter with $M_{T}$ basis patterns ($M_{T}-1$ parasitic antenna elements). For the SU transmitter to send a symbol $x_{s}(k)$ to the SU receiver at time instant $k$, it selects a basis patterns weight vector $\mathbf{s_{bs}} = \left[x_{s}(k)\sqrt{\alpha_{T,1}(k)} e^{j \theta_{T,1}(k)} \,\, \ldots \,\, x_{s}(k) \sqrt{\alpha_{T,M_{T}}(k)} e^{j \theta_{T,M_{T}}(k)}\right]$. In this case, ${\bf s_{bs}} = {\bf w_{T}}(k) x(k)$, where ${\bf w_{T}}(k) \in \mathbb{C}^{M_{T} \times 1}$.
Without loss of generality, we set $\alpha_{T,i}(k) = \frac{1}{M_{T}}, \forall i$ and randomly vary the phases $\theta_{T,i}(k)$ every time instant $k$ according to a uniform distribution $\theta_{i}(k) \sim \mbox{Unif}(0,2\pi)$ (independent phases are selected for all basis patterns). Hence, at each time instant $k$, the SU transmitter adjusts the reactive loads such that $\mathbf{s_{bs}} = \left[\frac{x_{s}(k) e^{j \theta_{T,1}(k)}}{\sqrt{M_{T}}} \,\, \ldots \,\, \frac{x_{s}(k) e^{j \theta_{T, M_{T}}(k)}}{\sqrt{M_{T}}}\right]$.

\subsubsection{RAB at the SU receiver}
In order to induce fluctuations in the PU-to-SU interference channel $h_{ps}$, RAB must be employed by the SU receiver. The receiver uses an ESPAR antenna with $M_{R}$ basis patterns. A weight vector of $\mathbf{w_{R}}(k) = \left[ \frac{e^{j \theta_{R,1}(k)}}{\sqrt{M_{R}}} \,\, \ldots \,\, \frac{e^{j \theta_{R,M_{R}}(k)}}{\sqrt{M_{R}}}\right]$, where the set of phase shifts $\theta_{R,i}(k), i = 1,...,M_{R},$ are independent and uniformly distributed. Because an ESPAR antenna has a single RF chain, the received beamspace signal vector ${\bf r_{bs}}$ can not be handeled in the RF front-end. In fact, the RF chain perceives a linear combination of the elements of ${\bf r_{bs}}$ weighted by ${\bf w_{R}}(k)$. We denote the signal received at the SU receiver as $r(k) = {\bf w_{R}}^{H}(k) \, {\bf r_{bs}}$.

\subsubsection{Equivalent RAB Signal Model}
The signal model at the $k^{th}$ time instant for a transmitter-receiver pair using ESPAR antennas and applying RAB can be obtained by setting $r(k) = {\bf w_{R}}^{H}(k) \, {\bf r_{bs}}$ and ${\bf s_{bs}} = {\bf w_{T}}(k) \, x(k)$ thus yielding the result in (\ref{37}). We consider $\Phi_{T,i} = \left[\Phi_{T,i}(\varphi_{1}(k)), \, \Phi_{T,i}(\varphi_{2}(k)) \ldots \,\,, \Phi_{T,i}(\varphi_{Q}(k))\right]^{T}$ to be a set of $Q$ spatial samples of the $i^{th}$ basis pattern towards the $Q$ scatterers at the $k^{th}$ instant, where the $q^{th}$ element is the spatial element at a departure angle of $\varphi_{q}$. The same applies to $\Phi_{R,i}$ for the receive basis patterns. The time varying gain of the $q^{th}$ scatterer is denoted by $\beta_{q}(k)$. from this equation, it is clear that applying RAB results in an equivalent channel $h^{eq}(k)$, with new statistics other than the Rician distribution. The statistics of $h^{eq}(k)$ depend on the weight vectors ${\bf w_{R}}(k)$ and ${\bf w_{T}}(k)$. Revisiting the signal model in (\ref{2}), we rewrite the received signals at the PU and SU receivers ($r_{p}(k)$ and $r_{s}(k)$, respectively) in (21) and (22). We assume that the SU transmitter and receiver employ ESPAR antennas with $M_{T}$ and $M_{R}$  basis patterns, and response matrices ${\bf \Phi_{T}}_{M_{T} \times Q}$ and ${\bf \Phi_{R}}_{M_{R} \times Q}$, respectively while the PU transmitter and receiver use conventional antennas with radiation patterns ${\bf \tilde{\Phi}_{T}} \in \mathbb{C}^{1 \times Q}$ and ${\bf \tilde{\Phi}_{R}} \in \mathbb{C}^{1 \times Q}$, and weighting vectors ${\bf \tilde{w}_{R}}(k)$ and ${\bf \tilde{w}_{T}}(k)$, respectively. Without loss of generality, we set  ${\bf \tilde{w}_{R}}(k) = {\bf 1}$ and ${\bf \tilde{w}_{T}}(k) = {\bf 1}$. Let $i \in \{s, sp, ps\}$ denote the SU-to-SU, SU-to-PU, and PU-to-SU channels, respectively . Channel $i$ is characterized by the following parameters: a $K$-factor $K_{i}$, a deterministic LoS beamspace channel matrix ${\bf \overline{H}^{i}_{bs}}$ with entries $[\sqrt{\overline{\gamma}^{i}} \, e^{j\phi^{k,m}_{i}}]$ denoting the phase of the LoS component between the $k^{th}$ transmit basis pattern and the $m^{th}$ receive basis pattern, $\overline{\gamma}^{i}$ is the average SNR, and $\beta^{i}_{q}$ is the gain of the $q^{th}$ scattered component. $\beta^{i}_{q}$ is distributed with an arbitrary pdf with a variance of $\overline{\gamma}^{i}$. In the following theorem, we derive the statistics of the equivalent channels $h^{eq}_{s}(k)$, $h^{eq}_{sp}(k)$, and $h^{eq}_{ps}(k)$ after applying RAB.

\begin{thm}
{\it For large enough number of transmit and receive basis patterns $M_{T}$ and $M_{R}$, if the SU-to-SU, SU-to-PU, and PU-to-SU channels have LoS components with $K$-factors of $K_{s}$, $K_{sp}$, and $K_{ps}$ , respectively , the equivalent channels after applying RAB are Rayleigh distributed. Thus, $h^{eq}_{s}(k) \sim \mathcal{CN}(0,\overline{\gamma}_{s})$, $h^{eq}_{sp}(k) \sim \mathcal{CN}(0,\overline{\gamma}_{sp})$, and $h^{eq}_{ps}(k) \sim \mathcal{CN}(0,\overline{\gamma}_{ps})$.}
\end{thm}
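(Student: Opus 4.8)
The plan is to write each equivalent channel gain as the sum of a ``LoS contribution'' and a ``scattered contribution'', to show that after the phase randomization of RAB each of these converges in distribution to a zero-mean circularly-symmetric complex Gaussian, and then to add the two (independent) Gaussians and check that the total second moment is exactly $\overline{\gamma}_i$, so that the magnitude is Rayleigh. Concretely, using $r(k)={\bf w}_R^H(k)\,{\bf r_{bs}}$, ${\bf s_{bs}}={\bf w}_T(k)\,x(k)$ and (\ref{35}), the equivalent gain of channel $i\in\{s,sp,ps\}$ is $h^{eq}_i(k)={\bf w}_R^H(k)\,{\bf H}^{i}_{bs}\,{\bf w}_T(k)$ with ${\bf H}^{i}_{bs}=\sqrt{K_i/(K_i+1)}\,{\bf\overline{H}}^{i}_{bs}+\frac{1}{\sqrt{K_i+1}}\,{\bf\Phi}_R{\bf H}^{i}_b{\bf\Phi}_T^H$, where for the SU-to-PU channel one sets ${\bf w}_R(k)={\bf 1}$ (the PU uses a single conventional antenna), for the PU-to-SU channel ${\bf w}_T(k)={\bf 1}$, and for the SU-to-SU channel both are the RAB weight vectors specified above, with entries $M^{-1/2}e^{j\theta_{\cdot,\ell}(k)}$ and $\theta_{\cdot,\ell}(k)\sim{\rm Unif}(0,2\pi)$ i.i.d.\ across $\ell$ and across the two sides.

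For the LoS contribution I would handle the one-sided channels first. There it collapses to $\sqrt{\overline{\gamma}_i K_i/(K_i+1)}\;M^{-1/2}\sum_{\ell=1}^{M}e^{j(\phi^{\ell}_i+\theta_\ell(k))}$, a normalized sum of $M$ independent unit-modulus terms with phases uniform on the circle; each has zero mean, unit second moment, and vanishing $\mathbb{E}[\cdot^2]$ (so the limit is proper), hence its characteristic function factorizes as $\big(J_0(r/\sqrt{M})\big)^{M}\to e^{-r^2/4}$ (with $J_0$ the Bessel function and $r$ the modulus of the transform variable), i.e.\ it converges to $\mathcal{CN}(0,\overline{\gamma}_i K_i/(K_i+1))$; a Lindeberg CLT gives the same thing since the summands are bounded by $M^{-1/2}$. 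For the SU-to-SU channel the LoS term is the \emph{bilinear} form $\sqrt{\overline{\gamma}_s K_s/(K_s+1)}\,(M_RM_T)^{-1/2}\sum_{p,m}e^{-j\theta_{R,p}}e^{j\phi^{p,m}_s}e^{j\theta_{T,m}}$, whose summands are not independent. I would condition on the transmit phases: writing $\psi_p=\sum_m e^{j\phi^{p,m}_s}e^{j\theta_{T,m}}$, it becomes $(M_RM_T)^{-1/2}\sum_p e^{-j\theta_{R,p}}\psi_p$, now a sum of independent zero-mean terms in $p$, so the CLT applies conditionally with variance $\overline{\gamma}_s\frac{K_s}{K_s+1}(M_RM_T)^{-1}\sum_p|\psi_p|^2$; since $\mathbb{E}|\psi_p|^2=M_T$, a concentration argument gives $(M_RM_T)^{-1}\sum_p|\psi_p|^2\to1$, so the LoS term again converges to $\mathcal{CN}(0,\overline{\gamma}_s K_s/(K_s+1))$.

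For the scattered contribution, ${\bf w}_R^H{\bf\Phi}_R{\bf H}^{i}_b{\bf\Phi}_T^H{\bf w}_T$ is again a bilinear form ${\bf w}_R^H{\bf M}^{i}{\bf w}_T$ in the random-phase vectors, now with ${\bf M}^{i}={\bf\Phi}_R{\bf H}^{i}_b{\bf\Phi}_T^H$ itself random (given the scatterer gains $\beta^{i}_q$). The same conditioning/CLT argument — condition on the $\beta^{i}_q$ and on one weight vector, apply the CLT over the other index, and let the conditional variance concentrate — shows this term is asymptotically a zero-mean complex Gaussian; in particular it has zero mean. Adding the two independent Gaussian contributions yields a zero-mean complex Gaussian whose variance is, using $\mathbb{E}[{\bf w}\,{\bf w}^H]=M^{-1}{\bf I}$ for the RAB weights, $\mathbb{E}|h^{eq}_i|^2=(M_RM_T)^{-1}\|{\bf H}^{i}_{bs}\|_F^2=\overline{\gamma}_i\frac{K_i}{K_i+1}+\overline{\gamma}_i\frac{1}{K_i+1}=\overline{\gamma}_i$ (the cross term averaging to zero). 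Hence $h^{eq}_i\sim\mathcal{CN}(0,\overline{\gamma}_i)$ and $|h^{eq}_i|$ is Rayleigh, for $i=s,sp,ps$.

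The main obstacle is the SU-to-SU LoS term (and, identically, the SU-to-SU scattered term): because the random phases are shared across the rows and columns of the channel matrix, the relevant object is a bilinear, rather than a linear, form in i.i.d.\ phases, so a textbook CLT does not apply directly. The conditioning trick reduces it to a genuine sum of independent terms, but two points then require care: (i) the Lindeberg condition, i.e.\ that no single $|\psi_p|^2$ dominates the sum — this holds because $|\psi_p|=O(\sqrt{M_T\log M_R})$ while $\sum_p|\psi_p|^2=\Theta(M_RM_T)$ — and (ii) concentration of the conditional variance $(M_RM_T)^{-1}\sum_p|\psi_p|^2$ on its mean, which requires the LoS phase matrix to be ``generic'': in the degenerate case where ${\bf\overline{H}}^{s}_{bs}$ has separable phases (rank one with equal-modulus entries) all the $|\psi_p|$ coincide, the conditional variance fails to concentrate, and the bilinear form is then a product of two Gaussians rather than a Gaussian. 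Establishing the concentration — equivalently, controlling $\sum_{p\ne p'}\mathrm{Cov}(|\psi_p|^2,|\psi_{p'}|^2)=o(M_R^2M_T^2)$ under the model's assumptions on the LoS phases — is the step that deserves the most attention.
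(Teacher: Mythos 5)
Your proposal uses the same decomposition as the paper's Appendix~B (artificial-fading LoS term plus scattered term, a CLT on the former, Gaussianity of the latter by rich scattering, and addition of the two variances to get $\overline{\gamma}_i$), but it executes the CLT step differently and, in fact, more carefully. The paper writes $U^{R}(k)=A\sum_{l}y_{l}$ with $y_{l}=\sum_{m}\cos(\theta_{R,m}+\theta_{T,l}+\phi_{l,m})$ and asserts that the $y_{l}$ are independent ``because all phases assigned to the basis patterns are independent''; this is not true, since all the $y_{l}$ share the same receive phases $\theta_{R,m}$ (they are pairwise uncorrelated, but uncorrelatedness does not license the second application of the CLT). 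Your conditioning argument --- freeze the transmit phases, treat $(M_RM_T)^{-1/2}\sum_{p}e^{-j\theta_{R,p}}\psi_{p}$ as a sum of conditionally independent terms, verify Lindeberg, and then show the conditional variance $(M_RM_T)^{-1}\sum_{p}|\psi_{p}|^{2}$ concentrates at $1$ --- is exactly the repair this step needs, and your handling of the one-sided channels (where the sum is genuinely over i.i.d.\ unit-modulus terms) matches the paper's implicit treatment. Your closing caveat is also substantive rather than pedantic: if the LoS phase matrix is separable (rank one, e.g.\ $\phi^{l,m}=\phi_T^{l}+\phi_R^{m}$, which is the natural model for a single far-field specular path between two co-located arrays), the bilinear form factors into a product of two independent asymptotically Gaussian sums, the conditional variance does not concentrate, and the limit is double-Rayleigh rather than Rayleigh. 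The paper's proof silently assumes this degeneracy away, so your identification of the genericity condition on ${\bf\overline{H}}^{s}_{bs}$ is a genuine strengthening of (and implicit correction to) the published argument, not merely an alternative phrasing of it.
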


\begin{proof} See Appendix B. \IEEEQEDhere
\end{proof}

Theorem 1 implies that RAB restores the opportunities hindered by LoS interference by inducing fluctuations in the Rician channels. However, inducing fluctuations in the SU-to-SU channel is not preferrable at high SU-to-SU average SNR as clarified by Lemma 1. In the next subsection, we present a variation of RAB that preserves the reliability of the SU-to-SU channel.

\begin{figure*}[!t]
\normalsize
\setcounter{mytempeqncnt}{\value{equation}}
\setcounter{equation}{18}
\begin{align}
\label{37}
r(k) &= \underbrace{{\bf w_{R}}^{H}(k) \left(\sqrt{\frac{K}{K+1}} {\bf \overline{H}_{bs}} + \frac{1}{\sqrt{K+1}} {\bf \Phi_{R}} \, {\bf H_{b}} \, {\bf \Phi_{T}}^{H} \right)\, {\bf w_{T}}(k)}_{h^{eq}(k)} \, x(k) + n(k).\\
h^{eq}(k) &= {\bf w_{R}}^{H}(k) \left( \sqrt{\frac{K \, \overline{\gamma}}{K+1}} \begin{bmatrix}
       e^{j \phi^{1,1}} & \dots       & e^{j \phi^{1,M_{R}}}            \\[0.3em]
       \vdots      & \ddots      & \vdots      \\[0.3em]
       e^{j \phi^{M_{T},1}}           & \dots       & e^{j \phi^{M_{T},M_{R}}}
     \end{bmatrix} + \frac{1}{\sqrt{K+1}} \begin{bmatrix}
       \Phi_{R,1}(\varphi_{1}(k)) & \dots       & \Phi_{R,1}(\varphi_{Q}(k))            \\[0.3em]
       \vdots      & \ddots      & \vdots      \\[0.3em]
       \Phi_{R,M_{R}}(\varphi_{1}(k))           & \dots       & \Phi_{R,M_{R}}(\varphi_{Q}(k))
     \end{bmatrix} \right.
\end{align}
\[\left. \begin{bmatrix}
\beta_{1}(k) &  0  & \ldots & 0\\
0  &  \beta_{2}(k) & \ldots & 0\\
\vdots & \vdots & \ddots & \vdots\\
0  &   0       &\ldots & \beta_{Q}(k)
\end{bmatrix}  \begin{bmatrix}
       \Phi^{*}_{T,1}(\varphi_{1}(k)) & \dots       & \Phi^{*}_{T,M_{T}}(\varphi_{1}(k))            \\[0.3em]
       \vdots      & \ddots      & \vdots      \\[0.3em]
       \Phi^{*}_{T,1}(\varphi_{Q}(k))           & \dots       & \Phi^{*}_{T,M_{T}}(\varphi_{Q}(k))
     \end{bmatrix} \right) {\bf w_{T}}(k)\]
\setcounter{equation}{\value{mytempeqncnt}+2}
\hrulefill
\vspace*{4pt}
\end{figure*}

\begin{figure*}[!t]
\normalsize
\setcounter{mytempeqncnt}{\value{equation}}
\setcounter{equation}{20}
\begin{align}
\label{38}
r_{p}(k) &= h_{p}(k) \, x_{p}(k) + \underbrace{{\bf \tilde{w}_{R}}^{H}(k) \left(\sqrt{\frac{K_{sp}}{K_{sp}+1}} {\bf \overline{H}_{bs}^{sp}} + \frac{1}{\sqrt{K_{sp}+1}} {\bf \tilde{\Phi}_{R}} \, {\bf H_{b}^{sp}} \, {\bf \Phi_{T}}^{H} \right)\, {\bf w_{T}}(k)}_{h_{sp}^{eq}(k)} \, x_{s}(k) + n_{p}(k). \\
r_{s}(k) &= \underbrace{{\bf w_{R}}^{H}(k) \left(\sqrt{\frac{K_{ps}}{K_{ps}+1}} {\bf \overline{H}_{bs}^{ps}} + \frac{1}{\sqrt{K_{ps}+1}} {\bf \Phi_{R}} \, {\bf H_{b}^{ps}} \, {\bf \tilde{\Phi}_{T}}^{H} \right)\, {\bf \tilde{w}_{T}}(k)}_{h_{ps}^{eq}(k)} \, x_{p}(k)
\end{align}
\[+ \underbrace{{\bf w_{R}}^{H}(k) \left(\sqrt{\frac{K_{s}}{K_{s}+1}} {\bf \overline{H}_{bs}^{s}} + \frac{1}{\sqrt{K_{s}+1}} {\bf \Phi_{R}} \, {\bf H_{b}^{s}} \, {\bf \Phi_{T}}^{H} \right)\, {\bf w_{T}}(k)}_{h_{s}^{eq}(k)} \, x_{s}(k) + n_{s}(k). \]
\setcounter{equation}{\value{mytempeqncnt}+2}
\hrulefill
\vspace*{4pt}
\end{figure*}

\subsection{Smart basis patterns: Maintaining the Secondary Channel Reliability using Artificial Diversity}
It is important to note that the application of RAB will induce fluctuations not only in the interference channels, but in the SU-to-SU channel as well. Consequently, if the secondary channel is originally Rayleigh-faded, RAB will not alter its statistics. However, if the secondary channel is Rician (reliable LoS channel), RAB will turn it into a severely-faded one. Therefore, the techniques explained in the previous subsection are applicable for the Rician-Rayleigh scenario, but not for the Rician-Rician scenario. If the secondary channel is a reliable LoS channel, we aim at maintaining its reliability while inducing fluctuations in the interference channels. This can be achieved by a technique that we refer to as {\it artificial receive diversity}. Based on (\ref{Th1}), the SU-to-SU equivalent channel can be given by $h_{s}^{eq}(k) = U_{s}(k)+V_{s}(k)$. The scattered component $V_{s}(k)$ follows a Rayleigh distribution, while the artificial fading component can be formulated by rewriting (\ref{Th1}) as
\[U_{s}(k) =\]
\begin{equation}
\label{Bas1}
\sqrt{\frac{K_{s} \overline{\gamma}_{s}}{(K_{s}+1)M_{T}M_{R}}} \sum_{m=1}^{M_{R}} \underbrace{\left(\sum_{l=1}^{M_{T}} e^{j(\theta_{T,l}(k)+\phi_{s}^{l,m})}\right)}_{a_{m}} e^{j\theta_{R,m}(k)},
\end{equation}
where $a_{m}$ is the superposition of the responses of the $M_{T}$ transmit basis patterns towards the $m^{th}$ receive basis pattern. At high SNR, it is required to maintain the reliability of the SU-to-SU channel, which can be done by applying Maximum Ration Combining (MRC) at the receive basis patterns. This can be simply applied by setting $\theta_{R,m}(k) = -\angle a_{m}$. Note that $a_{m}$ does not need to be estimated at the SU receiver as the values of $\theta_{T,l}(k)$ are pseudo-randomly generated by the SU transmitter, and the values of $\phi_{s}^{l,m}$ are deterministic and depend on the antenna structure. The artificial fading component in this case will follow a Chi-Square distribution with $M_{R}$ degrees of freedom, i.e., $U_{s}(k) \sim \mathcal{X}(2M_{R})$. Thus, the more basis patterns at the receive ESPAR antenna, the less fluctuations observed in $U_{s}(k)$. In this case, the receive basis patterns act as smart basis patterns, and because this diversity scheme averages the fluctuations that were artificially induced by the SU transmitter, we term it as artificial diversity. For an asymptotically large $M_{R}$, the LoS component can be perfectly reconstructed.

It is worth mentioning that, although the receive basis patterns act as smart antennas for the SU-to-SU link, they still act as dumb antennas for the PU-to-SU link, as the ESPAR weights at the SU receiver are not selected based on the PU-to-SU CSI and fluctuations are still induced in the interference channels. An interesting analogy between the artificial diversity and spread-spectrum systems can be constructed. In spread-spectrum systems, the transmitter {\it spreads the spectrum} of the transmitted signal using a {\it random code sequence} to minimize the ability of a hostile transceiver to {\it intercept/Jam} such signal, and the receiver {\it de-spreads} the spectrum of the signal before decoding. In the proposed artificial diversity scheme, the SU transmitter {\it spreads the pdf} of the transmitted signal over a larger dynamic range using {\it random weights} for orthogonal basis patterns to minimize the {\it PU-to-SU} and the {\it SU-to-PU} interference, and the SU receiver {\it de-spreads} the pdf of the signal before decoding. 

\section{The Multiple Exposed Secondary Users Problem: Opportunistic Scheduling using RAB}

In this section, we shift from {\it temporal} opportunistic power allocation in point-to-point channels, to {\it spatial} opportunistic scheduling in multiuser channels. In a conventional multiuser network, selecting the user with the best instantaneous channel condition results in a Multiuser Diversity (MD) gain in the achievable sum capacity. This scheme is known as Dynamic Time Division Multiple Access (D-TDMA). In addition to the conventional MD gain, Zhang and Liang introduced the concept of Multiuser Interference Diversity (MID) in \cite{25}. MID can be exploited to improve the sum capacity gain in underlay cognitive schemes by scheduling the user with the maximum SINR at each time slot. Thus, MID involves exploiting the spatial diversity in the SU-to-SU, SU-to-PU, and PU-to-SU channels. In \cite{28}, it was shown that the sum capacity of underlay cognitive radio scales like $\log(N)$ when only interference constraints at the PU receiver are imposed. This means that the underlay cognitive capacity grows faster than the conventional multiuser channel capacity, which grows like $\log(\log(N))$ \cite{6}.

In the seminal work in \cite{6}, it was shown that LoS channels hinder the achievable MD gain. It is expected that a similar effect would be encountered when considering LoS interference in an underlay cognitive setting. In this section, we aim at answering the following question: ``{\it How would LoS interference affect the SU capacity scaling?}" This section can be considered as a generalization for the {\it exposed SU problem} when multiple SUs exist. While there exists multiple settings for the multiple SU scenario, we limit our discussion to the {\it Parallel Access Channel} (PAC) for the following reasons. First, unlike the multi-access and broadcast channels, the PAC involves the impact of both multiuser transmit and receive interference diversity \cite{25}. Second, this setting is applicable for multiuser underlay D2D communications in cellular systems \cite{31}.

\begin{figure}[!t]
\centering
\includegraphics[width=2.5 in]{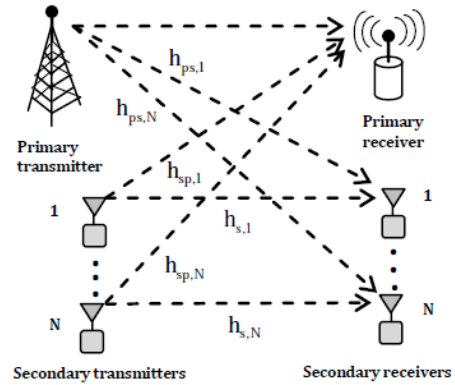}
\caption{An $N$ user parallel access cognitive channel.}
\end{figure}

\subsection{Multiuser PAC Problem Setting}
Consider an extension for the system model presented in Section II, where $N$ independent SU transmitter-receiver pairs coexist with a PU transmitter-receiver pair. Fig. 6 depicts the $N$-user underlay cognitive scheme, where $(h_{s,1}, h_{s,2},...,h_{s,N})$, $(h_{sp,1}, h_{sp,2},...,h_{sp,N})$, and $(h_{ps,1}, h_{ps,2},...,h_{ps,N})$ denote the sets of $N$ independent and identically distributed SU-to-SU, SU-to-PU, and PU-to-SU channels, respectively. Assume time-slotted transmission, where the fading channels change independently every slot. Because we are only interested in obtaining the capacity scaling laws, we relax the average interference constraint and consider a peak constraint of $Q_{p}$ only. Among all SU pairs, only the one with the best SINR is scheduled at each time slot. Thus, the selected SU pair $n^{*}$ at a certain time slot is
\begin{equation}
\label{sch}
n^{*} = \arg \, \max_{n \in \{1,2,...,N\}} \frac{\gamma_{s,n} \frac{Q_{p}}{\gamma_{sp,n}}}{1 + \overline{\gamma}_{p} \gamma_{ps,n}}.
\end{equation}
Now assume a reference network, which does not include a PU link. In such network, which corresponds to a conventional PAC, the selected user at a certain time slot is $n^{*} = \arg \, \max_{n \in \{1,2,...,N\}} \gamma_{s,n}$. We define the MD gain $\overline{\gamma}(N)$ as the ratio between the selected SU average channel power in an $N$ SUs network to the single SU average channel power. Therefore, the MD gain for the reference network is given by $\overline{\gamma}_{o}(N) = \frac{\mathbb{E}\{\gamma_{s,n^{*}}\}}{\mathbb{E}\{\gamma_{s}\}}$. Assuming Rayleigh fading for all SU links, $\mathbb{E}\{\gamma_{s,n^{*}}\} = \overline{\gamma}_{s} \, H_{N}$ \cite{19}, where $H_{N}$ is the $N^{th}$ harmonic number defined as $H_{N} = 1 + \frac{1}{2} + \frac{1}{3} + ...+ \frac{1}{N}$. For large number of users, we have $\lim_{N \to \infty} \overline{\gamma}_{o}(N)=\lim_{N \to \infty} H_{N} = \log(N)-\gamma$, where $\gamma$ is the Euler-Mascheroni constant. Thus, $\lim_{N \to \infty} \overline{\gamma}_{o}(N) \approx \log(N)$. The ergodic capacity of the reference network is given by $\mathbb{E}\left\{\log(1+\gamma_{s,n^{*}})\right\}$, and is upper-bounded by $\log(1+\mathbb{E}\left\{\gamma_{s,n^{*}}\right\})$ due to the Jensen's Inequality and the concavity of the logarithmic function. Therefore, the sum capacity of the reference network, which represents a conventional multiuser scheme, grows like $\log(\log(N))$. In the next subsection, we obtain the scaling laws for the underlay cognitive PAC scheme, and study the impact of LoS interference on the capacity growth rate.

\subsection{Capacity Scaling for Exposed SUs}
Let $\overline{\gamma}_{PAC}(N)$ be the ratio between the selected SU pair average SINR in an $N$ SUs network to the single SU pair average SINR. This quantity jointly describes the MD and MID gains. The SU pair is selected as shown in (\ref{sch}). A lower bound on $\overline{\gamma}_{PAC}(N)$ is given by
\begin{align}
\label{sch2}
\overline{\gamma}_{PAC}(N) &\stackrel{(a)}{\geq} \frac{\mathbb{E}\left\{\frac{ \gamma_{s,n^{'}} Q_{p}}{\gamma_{sp,n^{'}}({1 + \overline{\gamma}_{p} \gamma_{ps,n^{'}}})} \right\}}{\mathbb{E}\left\{\frac{ \gamma_{s,n} Q_{p}}{\gamma_{sp,n}({1 + \overline{\gamma}_{p} \gamma_{ps,n}})} \right\}} \\
&\stackrel{(b)}{=} \frac{\mathbb{E}\left\{\gamma_{s,n^{'}}\right\}}{\mathbb{E}\left\{\gamma_{s,n}\right\}} \frac{\mathbb{E}\left\{\frac{Q_{p}}{\gamma_{sp,n^{'}}({1 + \overline{\gamma}_{p} \gamma_{ps,n^{'}}})} \right\}}{\mathbb{E}\left\{\frac{Q_{p}}{\gamma_{sp,n}({1 + \overline{\gamma}_{p} \gamma_{ps,n}})} \right\}} \\
&\stackrel{(c)}{=} \overline{\gamma}_{o}(N),
\end{align}
where the SU pair $n^{'}$ is the one with the largest SU-to-SU channel. The inequality (a) is due to the fact that selecting the SU pair with the largest $\gamma_{s,n}$ is suboptimal (because the user with the maximum $\gamma_{s,n}$ is not necessarily the one with the largest SINR), (b) follows from the independence of the SU-to-SU and the interference channels, and (c) is due to the fact that $\mathbb{E}\left\{\frac{Q_{p}}{\gamma_{sp,n^{'}}({1 + \overline{\gamma}_{p} \gamma_{ps,n^{'}}})} \right\} = \mathbb{E}\left\{\frac{Q_{p}}{\gamma_{sp,n}({1 + \overline{\gamma}_{p} \gamma_{ps,n}})} \right\}$. This inequality implies that multiuser diversity offers larger gain in the underlay cognitive scheme compared to conventional multiuser networks. This follows from the fact that we can exploit MID as well as conventional MD. While $\overline{\gamma}_{o}(N)$ acts as a lower bound for $\overline{\gamma}_{PAC}(N)$, we need to figure out the growth rate of $\overline{\gamma}_{PAC}(N)$ with the number of users $N$. This can be decided by the following upper-bound
\[\overline{\gamma}_{PAC}(N) \leq\]
\begin{align}
\label{sch3}
\frac{\mathbb{E}\left\{ \max_{n} \gamma_{s,n} \right\} \mathbb{E}\left\{ \max_{n} \frac{Q_{p}}{(1 + \overline{\gamma}_{p} \gamma_{ps,n})} \right\} \mathbb{E}\left\{\max_{n} \frac{1}{\gamma_{sp,n}} \right\}}{\mathbb{E}\left\{\gamma_{s,n} \right\} \mathbb{E}\left\{ \frac{Q_{p}}{(1 + \overline{\gamma}_{p} \gamma_{ps,n})} \right\} \mathbb{E}\left\{ \frac{1}{\gamma_{sp,n}} \right\}}.
\end{align}
The inequality in (\ref{sch3}) is due to the fact that the SU pair with the best SU-to-SU channel does not necessarily have the minimum PU-to-SU and SU-to-PU channels. In the following theorem, we construct an upper-bound on the growth rate of $\overline{\gamma}_{PAC}(N)$.

\begin{thm}
{\it When the $N$ SU pairs are exposed to the PU terminals via LoS links with $K$-factors $K_{sp}$ and $K_{ps}$, and all SU-to-SU channels include a LoS component with a $K$-factor of $K_{s}$, the combined MD-MID gain $\overline{\gamma}_{PAC}(N)$ growth rate is upper-bounded by }
\[
\frac{\left( \sqrt{\frac{\log(N)}{K_{s}+1}} +  \sqrt{\frac{K_{s}}{K_{s}+1}}\right)^{2} + \mathcal{O}(\log(\log(N)))}{\left( \sqrt{\frac{1}{N(K_{sp}+1)}} +  \sqrt{\frac{K_{sp}}{K_{sp}+1}}\right)^{2} + \mathcal{O}(\log(N))} \times \]
\[
\frac{1}{\left( \sqrt{\frac{1}{N(K_{ps}+1)}} +  \sqrt{\frac{K_{ps}}{K_{ps}+1}}\right)^{2} + \mathcal{O}(\log(N))}.\]
\end{thm}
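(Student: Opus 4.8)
The plan is to start from the decoupling bound (\ref{sch3}), which factors the combined MD--MID gain into a product of three ratios, and estimate each ratio separately. Since the three denominators there --- $\mathbb{E}\{\gamma_{s,n}\}$, $\mathbb{E}\{Q_{p}/(1+\overline{\gamma}_{p}\gamma_{ps,n})\}$ and $\mathbb{E}\{1/\gamma_{sp,n}\}$ --- are single-user quantities that do not scale with $N$, the real work is to bound the three $N$-dependent order statistics $\mathbb{E}\{\max_{n}\gamma_{s,n}\}$, $\mathbb{E}\{\max_{n}1/\gamma_{sp,n}\}=\mathbb{E}\{1/\min_{n}\gamma_{sp,n}\}$ and $\mathbb{E}\{\max_{n}1/(1+\overline{\gamma}_{p}\gamma_{ps,n})\}=\mathbb{E}\{1/(1+\overline{\gamma}_{p}\min_{n}\gamma_{ps,n})\}$ for i.i.d.\ Rician-power variables with the density (\ref{7}), and then multiply the resulting three ratios. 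I would treat the ``$\max$'' term (the SU-to-SU link) by a right-tail argument and the two ``$\min$'' terms (the interference links) by the dual left-tail argument, both built on the LoS/diffuse split (\ref{6}).

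For the SU-to-SU factor, write $c_{s}=\sqrt{K_{s}/(K_{s}+1)}$ and let $v_{s,n}\sim\mathcal{CN}(0,1/(K_{s}+1))$ be the diffuse part in (\ref{6}), so that $\gamma_{s,n}=\overline{\gamma}_{s}\,|c_{s}e^{j\phi_{s}}+v_{s,n}|^{2}\le\overline{\gamma}_{s}(c_{s}+|v_{s,n}|)^{2}$ and hence $\max_{n}\gamma_{s,n}\le\overline{\gamma}_{s}(c_{s}+\max_{n}|v_{s,n}|)^{2}$. Now $|v_{s,n}|^{2}$ is exponential with mean $1/(K_{s}+1)$, so $\mathbb{E}\{\max_{n}|v_{s,n}|^{2}\}=H_{N}/(K_{s}+1)$ and, by Jensen's inequality applied to the concave square root, $\mathbb{E}\{\max_{n}|v_{s,n}|\}\le\sqrt{H_{N}/(K_{s}+1)}$. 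Expanding the square, taking expectations and using $H_{N}=\log N+\mathcal{O}(1)$ yields
\[
\frac{\mathbb{E}\{\max_{n}\gamma_{s,n}\}}{\mathbb{E}\{\gamma_{s,n}\}}\le\left(\sqrt{\tfrac{H_{N}}{K_{s}+1}}+\sqrt{\tfrac{K_{s}}{K_{s}+1}}\right)^{2}=\left(\sqrt{\tfrac{\log N}{K_{s}+1}}+\sqrt{\tfrac{K_{s}}{K_{s}+1}}\right)^{2}+\mathcal{O}(\log\log N),
\]
which is exactly the numerator in the statement; the $\mathcal{O}(\log\log N)$ slack absorbs the $H_{N}-\log N$ term as well as the sub-leading Gumbel corrections that arise if $\mathbb{E}\{\max_{n}|v_{s,n}|\}$ is instead estimated directly from the extreme-value law of the Rice amplitude.

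Dually, for $i\in\{sp,ps\}$ one has $\min_{n}\gamma_{i,n}\le\overline{\gamma}_{i}(c_{i}+\min_{n}|v_{i,n}|)^{2}$, and since $|v_{i,n}|^{2}$ is exponential with mean $1/(K_{i}+1)$, $\mathbb{E}\{\min_{n}|v_{i,n}|^{2}\}=1/(N(K_{i}+1))$; the same Jensen step then gives $\mathbb{E}\{\min_{n}\gamma_{i,n}\}/\overline{\gamma}_{i}\le(\sqrt{1/(N(K_{i}+1))}+\sqrt{K_{i}/(K_{i}+1)})^{2}$, which is precisely the form of the two denominators claimed. Feeding these $\Theta(1/N)$-scale estimates of $\min_{n}\gamma_{sp,n}$ and $\min_{n}\gamma_{ps,n}$ back into the reciprocal moments in (\ref{sch3}) and normalising by the $N$-independent single-user moments yields the advertised product; combined with the lower bound $\overline{\gamma}_{PAC}(N)\ge\overline{\gamma}_{o}(N)\approx\log N$ of (\ref{sch2}) this pins the growth rate. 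In particular, for large interference $K$-factors the two denominators tend to the positive constants $K_{sp}/(K_{sp}+1)$ and $K_{ps}/(K_{ps}+1)$, so $\overline{\gamma}_{PAC}(N)=\mathcal{O}(\log N)$ and the SU sum capacity is driven down from the interference-free scaling $\log N$ to $\log\log N$, which is the exposed-users effect.

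The step I expect to be the main obstacle is making the reciprocal interference moments rigorous: because the Rician power density does not vanish at the origin (from (\ref{7}), $f_{\gamma_{i}}(0)=(1+K_{i})e^{-K_{i}}/\overline{\gamma}_{i}>0$), both $\mathbb{E}\{1/\gamma_{sp,n}\}$ and $\mathbb{E}\{1/\min_{n}\gamma_{sp,n}\}$ are near-singular, so the chain of inequalities cannot be run literally at the level of raw moments. I would get around this by working with the concentration of $\min_{n}\gamma_{sp,n}$ (and $\min_{n}\gamma_{ps,n}$) around its $\Theta(1/N)$ scale --- it is asymptotically exponential with rate $N f_{\gamma_{i}}(0)$ --- and passing to order-of-magnitude estimates, or equivalently by adopting the standard convention of replacing harmonic means by arithmetic means; the care is then in keeping the bookkeeping consistent so that the genuine $\mathcal{O}(\log\log N)$ correction in the SU-to-SU numerator (the sub-leading Gumbel term multiplied by the $\sqrt{\log N}$ factor produced when the amplitude is squared) is not conflated with the coarser $\mathcal{O}(\log N)$ slack retained in the interference denominators.
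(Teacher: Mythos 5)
Your proposal is correct at the level of rigor the paper itself operates at, and it reaches the stated bound by a genuinely more elementary route. You start from the same decoupling inequality (\ref{sch3}), but where the paper invokes extreme-value limit laws --- the Gumbel law of \cite[Lemma 2]{6} for $\max_n\gamma_{s,n}$ and the Weibull/min-domain-of-attraction result of \cite[Theorem 1]{27} for $\min_n\gamma_{sp,n}$, the latter requiring a series expansion of the Marcum-$Q$ function and an inversion of the Rician cdf to identify the normalizing sequence $d_{sp,N}$ --- you obtain the same leading-order expressions from the triangle inequality $|c_i e^{j\phi_i}+v_{i,n}|\le c_i+|v_{i,n}|$ together with the closed-form moments $H_N/(K+1)$ and $1/(N(K+1))$ of the max/min of exponentials and one application of Jensen to the square root. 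Since the theorem only claims an upper bound on the growth rate, your one-sided estimates suffice, and you avoid the limit-theorem machinery entirely; what the paper's route buys in exchange is the actual asymptotic distribution of the order statistics (hence, in principle, matching lower bounds), which your bounds do not provide. Notably, both arguments share the same soft spot, which you are more candid about than the paper: the ratio $\mathbb{E}\{1/\min_n\gamma_{sp,n}\}/\mathbb{E}\{1/\gamma_{sp,n}\}$ involves divergent raw moments (the Rician power density is positive at the origin), and the paper's own Jensen step only shows this term \emph{grows at least as fast as} $1/d_{sp,N}$ --- a lower bound used inside an upper bound --- whereas you explicitly propose to repair this via concentration of the minimum at its $\Theta(1/N)$ scale. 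Your reading of the $\mathcal{O}(\cdot)$ correction terms (as subordinate to the squared-bracket leading terms) is also consistent with how the paper uses Theorem 2 in its four scenario discussions, including the paper's own unresolved annotation about whether the denominator correction should be $\log(N)$ or $\log(1/N)$.
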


\begin{proof} See Appendix C. \IEEEQEDhere
\end{proof}
In Theorem 2, we assume that $\overline{\gamma}_{p} >> 1$ and that all SU pairs have the same average SNR. A system with non-identical average SNR for SU pairs can achieve the same growth rate in Theorem 2 using a proportional fair algorithm \cite{6}. In the following subsections, we describe the capacity scaling for the scenarios examined before in Section III.

\subsubsection{The Rician-Rician Scenario}
For asymptotically large $K$-factors, the growth rate in theorem 2 tends to be a constant term. This is due to the fact that large value of $K_{s}$ hinders the MD gain, and large values of $K_{sp}$ and $K_{ps}$ hinder the transmit and receive MID. The impact of the SU-to-SU LoS transmission is a reduction of $\frac{1}{K_{s}+1}$ in the growth term of $\log(N)$. On the other hand, the impact of SU-to-PU LoS interference is a reduction of $\frac{1}{K_{sp}+1}$ in the $\frac{1}{N}$ factor, thus the term $\sqrt{\frac{1}{N(K_{sp}+1)}} + \sqrt{\frac{K_{sp}}{K_{sp}+1}} \to 1$ when $K_{sp} \to \infty$.

\subsubsection{The Rician-Rayleigh Scenario}
For asymptotically large $K_{sp}$ and $K_{ps}$, the terms $\sqrt{\frac{1}{N(K_{sp}+1)}} + \sqrt{\frac{K_{sp}}{K_{sp}+1}} \to 1$ and $\sqrt{\frac{1}{N(K_{ps}+1)}} + \sqrt{\frac{K_{ps}}{K_{ps}+1}} \to 1$. Contrarily, the SU-to-SU channel is subject to Rayleigh fading, so $K_{s}$ = 0. Thus, $\overline{\gamma}_{PAC}(N)$ approximately grows like $\log(N)$, and the capacity grows like $\log(\log(N))$. This is the same growth rate of the conventional (reference) multiuser network with no MID. Therefore, LoS interference hinders MID gain and the capacity follows the same scaling law of the reference network.

\subsubsection{The Rayleigh-Rayleigh Scenario}
All $K$-factors are set to zero, and the growth rate in Theorem 2 reduces to $N^{2} \log(N)$. Thus, the SU capacity scales like $\log(N)+\mathcal{O}(\log(\log(N)))$, which means that the multiuser gain scales faster with the number of users in the underlay cognitive system compared to conventional interference free system. This is because the SUs exploit MID as well as the MD offered by the SU-to-SU channel fluctuations.

\subsubsection{The Rayleigh-Rician Scenario}
The growth rate of $\overline{\gamma}_{PAC}(N)$ becomes $N^{2}\left( \sqrt{\frac{\log(N)}{K_{s}+1}} +  \sqrt{\frac{K_{s}}{K_{s}+1}}\right)^{2}$. Consequently, the capacity still scales like $\log(N)+\mathcal{O}(\log(\log(N)))$. This implies that the impact of LoS SU-to-SU channel is not as substantial as the impact of LoS interference, which changes the growth rate of SU capacity scaling by suppressing the $\log(N)$ term. These results lead to the following remark.\\
{\bf Remark 4:} The sum capacity of multiuser underlay cognitive radio systems scales like $\log(N)$, while the sum capacity of a conventional multiuser system scales like $\log(\log(N))$. Thus, multiuser diversity gain is larger in the underlay cognitive network due to the exploitation of MID in addition to MD. However, when the SUs are exposed to LoS mutual interference, the multiuser diversity characteristics are altered and the scaling law reduces to $\log(\log(N))$ for asymptotically large $K$-factors. We denote this problem as the {\it multiple exposed SUs problem}. \\

An intuitive approach to restore the MID gain hindered by LoS interference is to induce fluctuations in the interference channels using RAB. Because mobile SU transmitters can not use multiple dumb antennas to randomize the channel due to space limitations, employing RAB using a single radio ESPAR antenna is a perfect solution to the multiple exposed SUs problem. This is investigated in the following subsection.

\subsection{Opportunistic Scheduling using RAB}
We investigate the capacity scaling when the RAB technique is employed at both SU terminals. Assume that for each SU pair, the SU transmitter and receiver applies RAB with $M_{T}$ and $M_{R}$ basis patterns, respectively . In this case, the selected SU pair $n^{*}$ at each time slot is
\begin{equation}
\label{sch5}
n^{*} = \arg \, \max_{n \in \{1,2,...,N\}} \frac{\gamma^{eq}_{s,n} \frac{Q_{p}}{\gamma^{eq}_{sp,n}}}{1 + \overline{\gamma}_{p} \gamma^{eq}_{ps,n}},
\end{equation}
where $\gamma^{eq}_{s,n}$, $\gamma^{eq}_{sp,n}$, and $\gamma^{eq}_{ps,n}$ are the equivalent SU-to-SU, SU-to-PU, and PU-to-SU equivalent channel powers after applying RAB. The signal model for the equivalent channels is given in (\ref{Th1}). The key idea behind the application of RAB in the multiuser setting is to induce artificial fluctuations in the interference channels. While classical opportunistic beamforming introduced in \cite{6} intended to schedule users at the peaks of their channel gains, the application of RAB in the underlay cognitive system intends to schedule users at the nulls of the interference channels. Thus, RAB performs {\it opportunistic nulling} and eliminates the impact of LoS interference emphasized in remark 4. In the following theorem, we derive the scaling law for opportunistic scheduling using RAB.

\begin{thm}
{\it When the $N$ SU pairs are exposed to the PU terminals via LoS links, and the SU transmitters and receivers apply RAB with $M_{T}$ and $M_{R}$ basis patterns, respectively, the combined MD-MID gain $\overline{\gamma}_{PAC}(N)$ growth rate is upper-bounded by }
\[
N^{2}\left(\left( \sqrt{\frac{\log(N)}{K_{s}+1}} +  \sqrt{\frac{M_{T}M_{R}K_{s}}{K_{s}+1}}\right)^{2} + \mathcal{O}(\log(\log(N)))\right).\]
\end{thm}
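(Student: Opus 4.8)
The plan is to reproduce the proof of Theorem~2 (Appendix~C) with every channel replaced by its post-RAB equivalent, and then to treat the SU-to-SU link separately from the two interference links. First I would rewrite the upper bound (\ref{sch3}) on $\overline{\gamma}_{PAC}(N)$ with $\gamma_{s,n},\gamma_{sp,n},\gamma_{ps,n}$ replaced by the equivalent powers $\gamma^{eq}_{s,n},\gamma^{eq}_{sp,n},\gamma^{eq}_{ps,n}$ obtained from the equivalent signal model (\ref{37})--(\ref{38}). The product-of-expectations bound remains valid because, conditioned on the RAB weight processes, the equivalent SU-to-SU gain and the two equivalent interference gains are built from disjoint collections of scatterer gains and phases (hence independent), and the $N$ pairs stay i.i.d.\ by the PAC setup of Section~V-A.

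For the two interference factors I would simply invoke Theorem~1: for $M_T$ and $M_R$ large enough the equivalent interference channels are $\mathcal{CN}$, i.e.\ their effective $K$-factors are zero, so the corresponding factors coincide with those of Theorem~2 at $K_{sp}=K_{ps}=0$. The Rayleigh--Rayleigh extreme-value estimates already carried out in Appendix~C, together with the assumption $\overline{\gamma}_p\gg1$, then give each of these two factors a growth of order $N$, hence a combined factor of order $N^2$. This yields the $N^2$ prefactor with no new computation.

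The new work is the SU-to-SU factor, for which I would use the artificial-diversity decomposition $h^{eq}_{s}(k)=U_s(k)+V_s(k)$ of Section~IV-B, the receive ESPAR being operated in the smart-basis-pattern mode $\theta_{R,m}(k)=-\angle a_m$ (still genuine RAB towards the PU, since $a_m$ depends only on the random transmit phases and the fixed geometry). The diffuse term obeys $V_s(k)\sim\mathcal{CN}\!\left(0,\overline{\gamma}_s/(K_s+1)\right)$ (a random-phase weighted sum of complex-Gaussian scatterer terms is still complex Gaussian), so the classical maximum of $N$ i.i.d.\ Rayleigh envelopes gives $\max_n|V_{s,n}|\le\sqrt{\tfrac{\overline{\gamma}_s}{K_s+1}\log N}\,(1+o(1))$, with the Rayleigh-maximum residual inherited as an $\mathcal{O}(\log\log N)$ term exactly as in Theorem~2. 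For the MRC-combined LoS term, (\ref{Bas1}) gives $|U_s(k)|=\sqrt{\tfrac{K_s\overline{\gamma}_s}{(K_s+1)M_TM_R}}\sum_{m=1}^{M_R}|a_m|$, and since $|a_m|\le M_T$ by the triangle inequality (a sum of $M_T$ unit-modulus phasors), $|U_s(k)|\le\sqrt{\overline{\gamma}_sM_TM_RK_s/(K_s+1)}$ deterministically. Adding the two bounds, squaring, and dividing by $\mathbb{E}\{\gamma^{eq}_{s,n}\}=\overline{\gamma}_s$ produces the bracketed factor $\bigl(\sqrt{M_TM_RK_s/(K_s+1)}+\sqrt{\log N/(K_s+1)}\bigr)^2+\mathcal{O}(\log\log N)$; multiplying by the $N^2$ interference factor gives the claimed bound.

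I expect the main obstacle to be this SU-to-SU estimate. One has to verify that, after bounding $|U_{s,n}|$ deterministically and $\max_n|V_{s,n}|$ by its Rayleigh extreme value, the three pieces of $\gamma^{eq}_{s,n}=|U_{s,n}|^{2}+|V_{s,n}|^{2}+2\,\mathrm{Re}\{U_{s,n}V_{s,n}^{*}\}$ reassemble into $\overline{\gamma}_s\bigl(\sqrt{M_TM_RK_s/(K_s+1)}+\sqrt{\log N/(K_s+1)}\bigr)^{2}$ up to the $\mathcal{O}(\log\log N)$ correction, and one must argue that replacing $|a_m|$ by its worst case $M_T$ gives a legitimate (order-wise loose, not over-optimistic) upper bound. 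A secondary point is re-checking, once the RAB weights are marginalized out, that the independence and identical-distribution hypotheses needed for (\ref{sch3}) still hold; this is exactly where the i.i.d.\ scatterer assumption across the $N$ pairs from Section~V-A is used.
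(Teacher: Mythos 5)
Your proposal reaches the right expression but by a route that differs from the paper's Appendix D in both halves of the bound. For the SU-to-SU factor, the paper does not use a deterministic triangle-inequality bound: it argues that for $N\to\infty$ there is almost surely a fraction $\epsilon$ of users whose artificial fading magnitude exceeds $\sqrt{K_s\overline{\gamma}_sM_TM_R/(K_s+1)}-\delta$, treats those $\epsilon N$ users as Rician with the boosted fixed component, and then reuses the Rician extreme-value scaling $l_{s,N}$ from the proof of Theorem 2. Your version --- $|U_{s,n}|\le\sqrt{K_s\overline{\gamma}_sM_TM_R/(K_s+1)}$ always, plus the Rayleigh extreme value for $\max_n|V_{s,n}|$ --- is actually the cleaner way to get an \emph{upper} bound (the paper's $\epsilon$-fraction argument has more of an achievability flavor), and the two give the same bracketed factor. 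For the interference factors, the paper again uses an $\epsilon$-fraction argument: a positive fraction of users have $|U_{sp,n}|<\delta$, those users look Rician with fixed component $\delta$, and the minimum-of-Rician analysis from Theorem 2 then gives growth $\bigl(\sqrt{1/(\epsilon N(K_{sp}+1))}+\delta\bigr)^{-2}$, i.e.\ order $N$, for \emph{any} $M_T\ge 2$.

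Two caveats on your version. First, obtaining the $N^2$ prefactor by invoking Theorem 1 (equivalent channels become Rayleigh) and then the $K=0$ case of Theorem 2 only works for $M_T,M_R$ large enough for the CLT in Theorem 1 to apply; the theorem as stated, and the paper's subsequent discussion of $M_T=2$ (Remark 5, where the artificial fading is arcsine-distributed and emphatically not Rayleigh), require the bound to hold for small numbers of basis patterns, which is exactly what the paper's $\epsilon$-fraction nulling argument delivers. Second, the detour through the smart-basis-pattern receiver mode $\theta_{R,m}=-\angle a_m$ is both unnecessary and slightly off-model: it changes the scheme from the RAB of the theorem statement, and it invalidates the normalization $\mathbb{E}\{\gamma^{eq}_{s,n}\}=\overline{\gamma}_s$ that you divide by (MRC combining inflates the mean of $|U_s|^2$ above $K_s\overline{\gamma}_s/(K_s+1)$). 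Since $\bigl|\sum_{l}\sum_{m}e^{j(\theta_{R,m}+\theta_{T,l}+\phi^{l,m})}\bigr|\le M_TM_R$ holds for arbitrary phases, you should state the deterministic bound directly for plain RAB and drop the MRC step; with that repair and the paper's nulling argument substituted for the Theorem-1 shortcut on the interference links, your proof is sound.
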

\begin{proof} See Appendix D. \IEEEQEDhere
\end{proof}
Theorem 3 suggests that increasing $M_{T}$ and $M_{R}$ leads to an increase in the effective magnitude of the fixed component by a factor of $\sqrt{M_{T}M_{R}}$, but has no effect on the $\log(N)$ term in the growth rate. This can be interpreted by the impact of the number of basis patterns on the dynamic range of the SU-to-SU channel fluctautions. Increasing $M_{T}$ and $M_{R}$ leads to increasing the dynamic range of $\gamma^{eq}_{s,n}$ by increasing the maximum possible value of the SU-to-SU artificial fading component (see Appendix D). This directly leads to an improvement in the MD gain. However, the impact of MD on the overall capacity scaling is not as cruical as the impact of MID. As stated in remark 4, LoS interference causes the capacity to scale with $\log(\log(N))$ instead of $\log(N)$. From theorem 3, it is obvious that after applying RAB, the capacity scales with $\log(N)$ regardless of the $K$-factor of the interference channels.

While it is obvious from theorem 3 that increasing $M_{T}$ and $M_{R}$ improves the MD gain by increasing the SU-to-SU channel dynamic range, the impact of $M_{T}$ and $M_{R}$ on the MID gain is not straightforward. In fact, the MID gain does not depend on the dynamic range of the interference channels. Instead, it depends on the ability of the SU transmitter/receiver to null the interference to/from the PU pair. To study the impact of the number of transmit and receive basis patterns, we focus on the SU-to-PU channel, and assume two RAB schemes: one with $M_{T}$ = 2, and the other with $M_{T} \to \infty$. For a fixed and infinitesimally small $\delta$, $M_{T}$ = 2, the term $\frac{\mathbb{E}\left\{\max_{n} \frac{1}{\gamma^{eq}_{sp,n}} \right\}}{\mathbb{E}\left\{\frac{1}{\gamma^{eq}_{sp,n}} \right\}}$ in (\ref{sch3}) grows like $\frac{1}{\left( \sqrt{\frac{1}{\epsilon_{2} N(K_{sp}+1)}} +  \delta \right)^{2}}$ (see Appendix D for a detailed proof), while when $M_{T} \to \infty$, the growth rate is $\frac{1}{\left( \sqrt{\frac{1}{\epsilon_{M_{T}} N(K_{sp}+1)}} +  \delta \right)^{2}}$, where $\epsilon_{2}$ and $\epsilon_{M_{T}}$ are the fractions of users that have an interference channel gain less than $\delta$. If the numbers of users meeting this criterion follow a binomial distribution with mean values of $\epsilon_{2} N$ and $\epsilon_{M_{T}} N$, and assuming a large $K$-factor then
\[\epsilon_{2} = P\left(\sqrt{\frac{K_{sp} \overline{\gamma}_{sp}}{2(K_{sp}+1)}} \, \left|\sum_{l=1}^{2} e^{j(\theta_{T,l}(k)+\phi_{sp}^{l})}\right| < \delta \right),\]
and
\[\epsilon_{M_{T}} = P\left(\sqrt{\frac{K_{sp} \overline{\gamma}_{sp}}{M_{T}(K_{sp}+1)}} \, \left|\sum_{l=1}^{M_{T}} e^{j(\theta_{T,l}(k)+\phi_{sp}^{l})}\right| < \delta\right).\]
From theorem 1, we know that when $M_{T} \to \infty$, then $\sqrt{\frac{K_{sp} \overline{\gamma}_{sp}}{M_{T}(K_{sp}+1)}} \, \left|\sum_{l=1}^{M_{T}} e^{j(\theta_{T,l}(k)+\phi_{sp}^{l})}\right|$ will follow a Rayleigh distribution. This implies that $\epsilon_{M_{T}}  = 1-\mbox{exp}\left(\frac{-\delta^{2} (K_{sp}+1)}{K_{sp} \overline{\gamma}_{sp}}\right)$. For $\delta \to 0$, and using a Taylor approximation, it can be shown that $\epsilon_{M_{T}}  \approx \frac{\delta^{2} (K_{sp}+1)}{K_{sp} \overline{\gamma}_{sp}}$. In order to derive $\epsilon_{2}$, we use the Euler identity as follows
\[\epsilon_{2} = P\left(A \left(\sum_{l=1}^{2} \mbox{cos}(\theta_{T,l}(k)+\phi_{sp}^{l})\right)^{2} \right. + \]
\[\left. A\left(\sum_{l=1}^{2}\mbox{sin}(\theta_{T,l}(k)+\phi_{sp}^{l})\right)^{2} < \delta^{2}\right),\]
where $A = \frac{K_{sp} \overline{\gamma}_{sp}}{2(K_{sp}+1)}$. In order to obtain $\epsilon_{2}$, we need to derive the pdf of the random variable
\[\left(\mbox{cos}(\theta_{T,1}(k)+\phi_{sp}^{1})+\mbox{cos}(\theta_{T,2}(k)+\phi_{sp}^{2})\right)^{2}\]
\[ + \left(\mbox{sin}(\theta_{T,1}(k)+\phi_{sp}^{1})+ \mbox{sin}(\theta_{T,2}(k)+\phi_{sp}^{2})\right)^{2},\]
which can be rewritten as
\[2 + 2 \, \mbox{cos}(\theta_{T,1}(k)+\phi_{sp}^{1}-\theta_{T,2}(k)-\phi_{sp}^{2}).\]
It can be easily shown that $\theta_{T,1}(k)+\phi_{sp}^{1}-\theta_{T,2}(k)-\phi_{sp}^{2}$ follow a uniform distribution, thus $1 + \mbox{cos}(\theta_{T,1}(k)+\phi_{sp}^{1}-\theta_{T,2}(k)-\phi_{sp}^{2})$ follows the distribution in (\ref{pdfeq}). Let $y = \frac{K_{sp} \overline{\gamma}_{sp}}{(K_{sp}+1)} (1 + \mbox{cos}(\theta_{T,1}(k)+\phi_{sp}^{1}-\theta_{T,2}(k)-\phi_{sp}^{2}))$. Hence, $\epsilon_{2}$ can be evaluated as
\begin{align}
\label{eqs1}
\epsilon_{2} &= \frac{(K_{sp}+1)}{K_{sp} \overline{\gamma}_{sp}} \int_{y=0}^{\delta^{2}} \frac{1}{\pi \sqrt{1-(1-y)^{2}}} dy \\
&= \frac{(K_{sp}+1)}{K_{sp} \overline{\gamma}_{sp}} \left( \frac{1}{\pi}\mbox{sin}^{-1}(\delta^{2}-1) + \frac{1}{2} \right).
\end{align}
It can be shown that $\left(\frac{1}{\pi} \mbox{sin}^{-1}(\delta^{2}-1) + \frac{1}{2} \right) > \delta^{2}, \forall \delta > 0$, which means that $\epsilon_{2} = \frac{(K_{sp}+1)}{K_{sp} \overline{\gamma}_{sp}} \left( \frac{1}{\pi}\mbox{sin}^{-1}(\delta^{2}-1) + \frac{1}{2} \right) >$  $\epsilon_{M_{T}} = \frac{(K_{sp}+1) \delta^{2}}{K_{sp} \overline{\gamma}_{sp}}$. Because $\overline{\gamma}_{PAC}(N)$ is proportional to $\epsilon_{M_{T}} N$, the MID gain is larger in the case when $M_{T}$ = 2 than in the case when $M_{T} \to \infty$. It is important to note that $M_{T} \to \infty$ corresponds to the case of Rayleigh fading. Thus, in a LoS interference environment, when only two dumb basis patterns (1 parasitic element) is employed at the SU transmitter, the MID gain is better than that in a Rayleigh fading interference channel. This is because MID does not depend on the dynamic range of the interference channel fluctuations, but depends on how frequent this interference channel is set to an arbitrarily small gain $\delta$. In Fig. 7, we plot the pdf of the artificial fading component after applying RAB with various numbers of basis patterns. It is clear that with 2 basis patterns, the artificial fading component is highly likely to either have the maximum value of $\sqrt{2}$, or have a small value near zero. As the number of basis patterns increase, the dynamic range increases and the pdf converges to an exponential distribution. It is clear that for an an arbitrarily small gain $\delta$, the probability that the artificial fading channel becomes less than $\delta$ is highest for the case of 2 basis patterns. Thus, two basis patterns null the interference more frequently than the conventional Rayleigh fading scenario. This leads to the following remark.\\
{\bf Remark 5:} LoS interference hinders the achievable MID gain and changes the capacity scaling law from $\log(N)$ to $\log(\log(N))$. Using dumb basis patterns, not only can one restore the capacity scaling law of $\log(N)$, but LoS interference can be exploited as well. Using 2 basis patterns, the MID can be improved compared to the Rayleigh fading scenario due to the desirable characteristics of the artificial fading channel pdf.

The conclusions in remark 5 have a great impact on the hardware implementation of RAB using ESPAR antennas. Because we only need one parasitic antenna, the value of the reactive load needed to adjust the random weights of the basis patterns can be obtained in closed form (refer to the ESPAR antenna model in Section II). Moreover, an ESPAR with one parasitic antenna entails less complexity and more compactness.

\begin{figure}[!t]
\centering
\includegraphics[width=3.5 in]{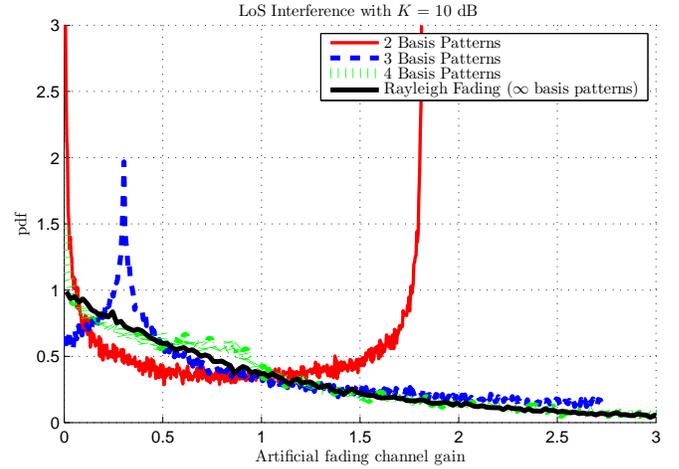}
\caption{Probability density function of the artificial fading component for various numbers of basis patterns.}
\end{figure}

\section{Numerical results}
This section provides numerical results for the techniques presented throughout the paper. Monte-Carlo simulations are carried out and results are averaged over 100,000 runs. We assume the following parameter settings: $N_{o} = 1$ W/Hz, $\overline{\gamma}_{sp}=\overline{\gamma}_{ps} = \overline{\gamma}_{s} = 0$ dB, and $\overline{\gamma}_{p} = 10$ dB. For all Rician channels, we assume a $K$-factor of 10 dB unless otherwise stated. Fig. 8 depicts the ergodic capacity of the SU as a function of the average interference power constraint $Q_{av}$. We define the factor $\rho = \frac{Q_{p}}{Q_{av}}$ and plot the ergodic capacity for $\rho = \infty$ (no peak interference constraint) and $\rho = 1.2$. As expected, the SU capacity is a monotonically increasing function of $Q_{av}$. In other words, the SU is allowed to transmit with higher power when the interference constraint is relaxed. We can also observe that when a joint peak and average interference constraint is imposed, the capacity decreases, and the amount of degradation is more significant for smaller values of $Q_{av}$.

It is notable that for all fading scenarios, the SU capacity is larger than the AWGN capacity, which agrees with the results in \cite{2}. The AWGN capacity is a special case of the Rician-Rician scenario when $K_{sp} = K_{ps} = K_{s} = \infty$. Thus, the AWGN channel is an extreme case of the LoS interference scenario tackled in Section III. Note that at $\overline{\gamma}_{s}=0$ dB, the Rayleigh-Rician scenario offers the best SU capacity because it enjoys a reliable SU-to-SU link, and severely-faded interference channels, which matches with the conclusions in remark 2. LoS interference can significantly degrade the SU capacity, as a capacity gap of 1.05 bps/Hz is observed between the Rayleigh-Rician and Rician-Rician scenarios. Also, the Rayleigh-Rayleigh scenario offers a capacity increase of about 0.75 bps/Hz more than the Rician-Rician scenario, which means that the degradation caused by severely faded SU-to-SU link is less harmful than LoS interference. The Rician-Rayleigh scenario performs worse than the Rician-Rician scenario as it suffers from both severely faded SU-to-SU channel and LoS interference. It can be also observed that all fading scenarios that include a Rayleigh faded SU-to-SU channel are more sensitive to the peak interference constraint, because the large dynamic range of the SU-to-SU channel implies that the allocated SU transmit power will ``{\it hit the peak constraint}" more frequently.

Fig. 9 depicts the impact of the SU-to-SU average SNR on the achievable SU capacity. Assuming a Rayleigh-Rician scenario, we plot the SU capacity normalized by the AWGN capacity. For $\overline{\gamma}_{s}= 0$ dB, the SU capacity is larger than the AWGN capacity for all values of $Q_{av}$. As $Q_{av}$ increases, the SU capacity gain due to opportunistic power allocation decreases and approaches unity. For $\overline{\gamma}_{s}= 10$ dB, the SU capacity is still larger than the AWGN capacity, but the achievable capacity gain is less than the case when $\overline{\gamma}_{s}= 0$ dB. For $\overline{\gamma}_{s}= 20$ dB, the SU capacity is less than half of the AWGN capacity, and the impact of $Q_{av}$ is negligible. This is because at high SU-to-SU average SNR, fading is a source of unreliability.

To eliminate the negative impact of LoS interference, the concept of opportunistic spectrum sharing using dumb basis patterns was proposed. Fig. 10 shows the pdf of the interference channel magnitude ($|h_{sp}|$ or $|h_{ps}|$) before and after applying RAB for various numbers of basis patterns. It is obvious from Fig. 10 that as the number of basis patterns increases, the pdf of the equivalent channel spreads, indicating a larger dynamic range of fluctuations. It is shown that 5 basis patterns (5 parasitic elements) are enough to convert a Rician channel to a Rayleigh one. Any further increase in the number of basis patterns will not result in an increase in the dynamic range of the channel. Fig. 11 depicts the magnitude of an interference channel ($|h_{sp}|$ or $|h_{ps}|$) versus time before and after applying RAB. It is clear that after applying RAB, the resultant channel will have a larger dynamic range and more occurrences of deep fades (marked with circles) than the LoS channel. Fig. 12 shows the pdf of an interference channel after applying RAB, and then after de-spreading at the SU receiver by the aid of artificial diversity.

In Fig. 13, we investigate the impact of the number of basis patterns on the achieved capacity gain. For the Rician-Rayleigh scenario, only one parasitic element is enough to achieve a significant capacity gain relative to the Rician-Rician scenario. Any further increase in the number of parasitic elements will make the capacity of the Rician-Rayleigh scenario converge to that of the Rayleigh-Rayleigh scenario. The same behavior is observed for the Rician-Rician scenario, where we apply the artificial diversity scheme to regain the reliability of the SU-to-SU channel.

We investigate the SU capacity scaling behavior in Fig. 14 by plotting the SU sum capacity normalized by a single SU capacity versus the number of SU pairs in a PAC. We note that the sum capacity in the Rayleigh-Rayleigh scenario grows faster than a reference network experiencing a $\log(\log(N))$ growth rate. This is due to MID, which allows the capacity to scale like $\log(N)$. As illustrated in remark 4, when LoS interference is encountered in the Rician-Rayleigh scenario, we note that the capacity scales with the same rate of the reference network. Thus, LoS interference hinders MID and the growth rate becomes $\log(\log(N))$. While the Rayleigh-Rician scenario causes a minor loss in the MD gain with no effect on the growth rate, the Rician-Rician scenario experiences no MD and MID gains, and the sum capacity is nearly constant for any number of SU pairs.

In Fig. 15, we show that using RAB with only 2 basis patterns, one can regain the MD gain losses encountered in the Rayleigh-Rician scenario. More importantly, Fig. 15 shows that LoS interference can be exploited. When applying RAB in the Rician-Rayleigh scenario, the SU sum capacity becomes superior to that of the Rayleigh-Rayleigh scenario due to the impact of opportunistic nulling, as explained in remark 5. Fig. 16 demonstrates the capacity growth rates for various scenarios by plotting the sum capacity normalized by the growth function. We see that in the Rician-Rayleigh scenario, the capacity grows like $\log(\log(N))$ as the ratio between the capacity and $\log(\log(N))$ tends to a constant as $N$ increases. This constant is known as the {\it pre-log} factor. We can also see that, while the Rayleigh-Rayleigh scenario grows like $\log(N)$, it has a less pre-log factor compared to the Rician-Rayleigh scenario with the application of RAB. Thus, applying RAB with 2 basis patterns in an environment with LoS interference improves the pre-log factor and maintains the $\log(N)$ growth rate, which interprets the results in Fig. 15.

\begin{figure}[!t]
\centering
\includegraphics[width=3.25 in]{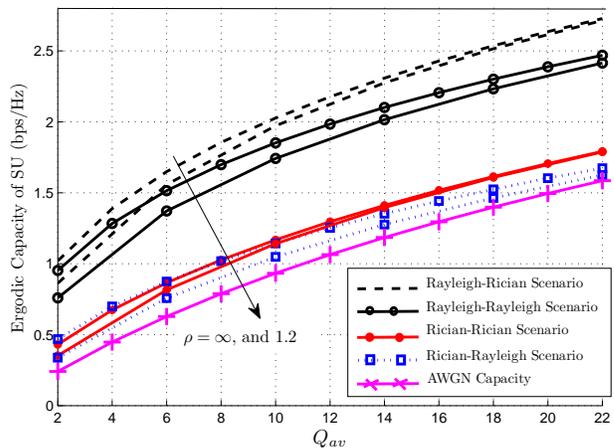}
\caption{Illustration for the impact of LoS interference on SU capacity.}
\end{figure}

\begin{figure}[!h]
\centering
\includegraphics[width=3.25 in]{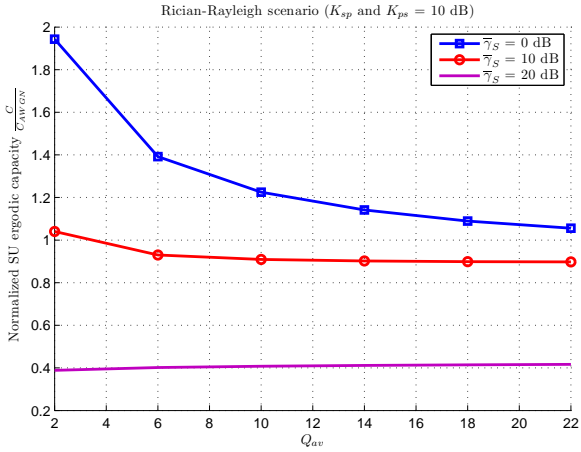}
\caption{Impact of LoS SU-to-SU channel on the SU capacity.}
\end{figure}

\begin{figure}[!h]
\centering
\includegraphics[width=3.25 in]{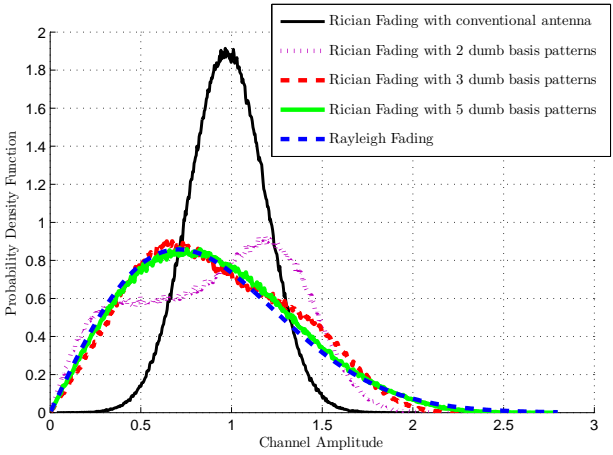}
\caption{Equivalent channel pdf after applying RAB.}
\end{figure}

\begin{figure}[!h]
\centering
\includegraphics[width=3.25 in]{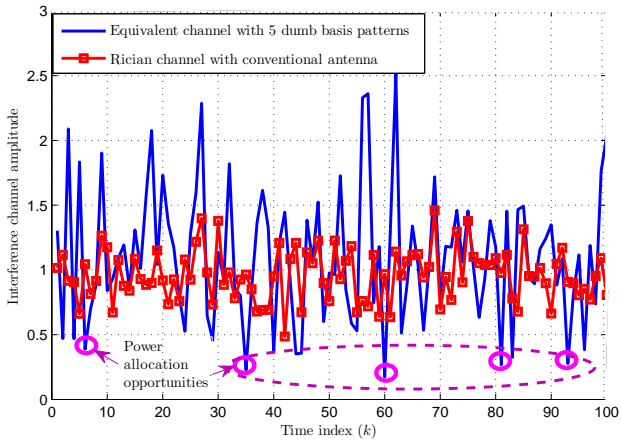}
\caption{Channel gain before and after applying RAB.}
\end{figure}

\begin{figure}[!h]
\centering
\includegraphics[width=3.25 in]{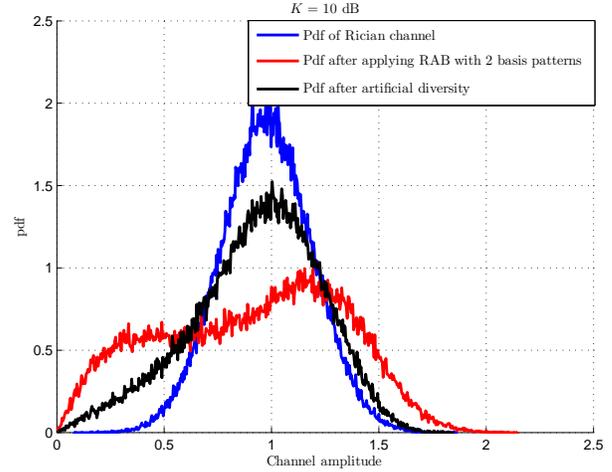}
\caption{Equivalent channel pdf after applying RAB and artificial diversity.}
\end{figure}

\begin{figure}[!h]
\centering
\includegraphics[width=3.25 in]{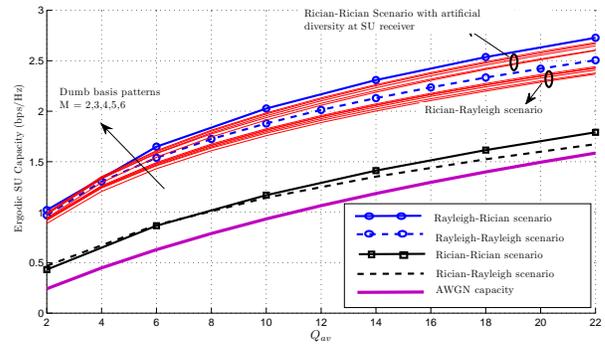}
\caption{Illustration for the impact RAB on SU capacity enhancement.}
\end{figure}

\begin{figure}[!t]
\centering
\includegraphics[width=3.5 in]{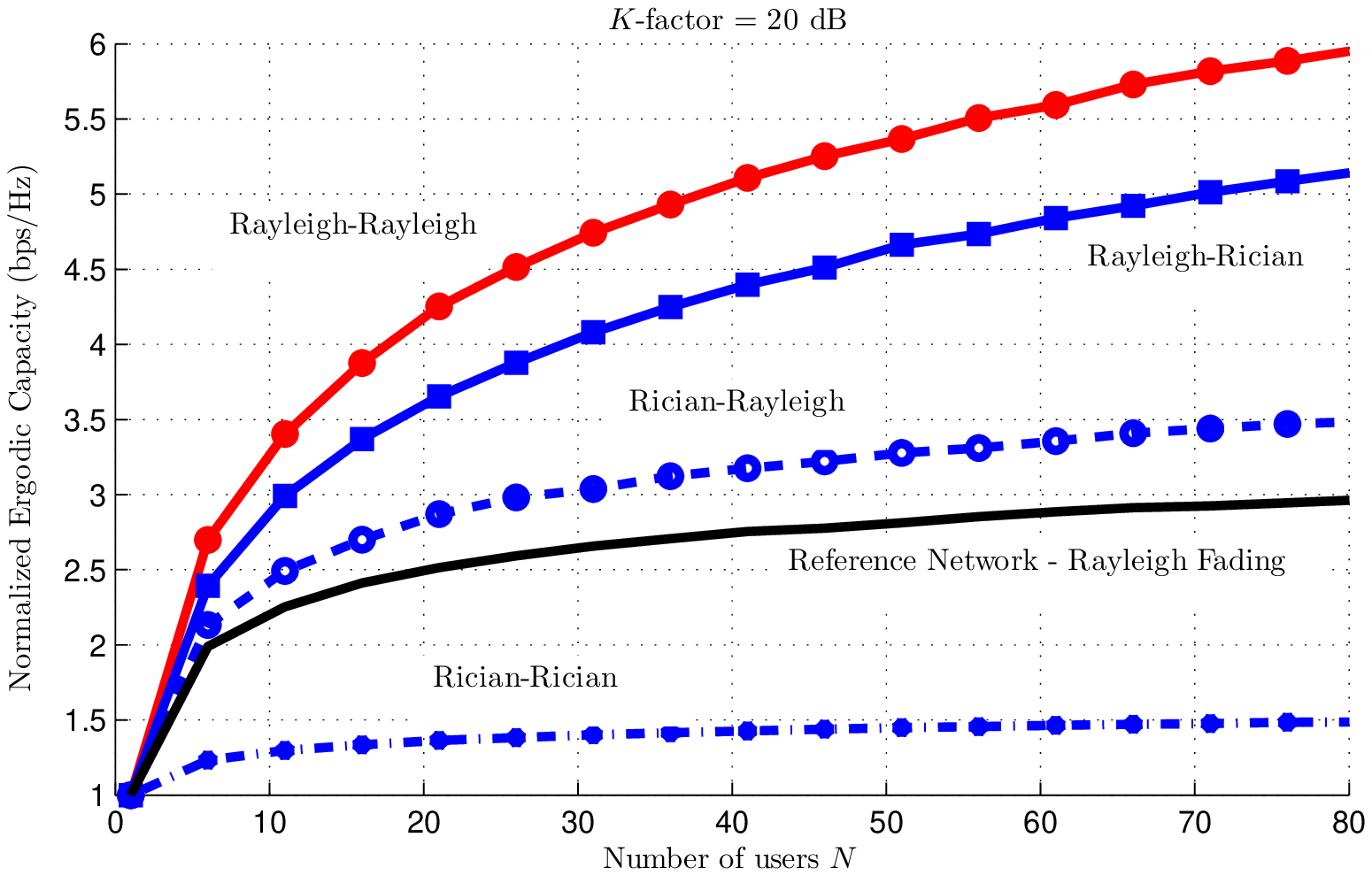}
\caption{Normalized ergodic capacity as a function of the number of SU pairs.}
\end{figure}

\begin{figure}[!t]
\centering
\includegraphics[width=3.5 in]{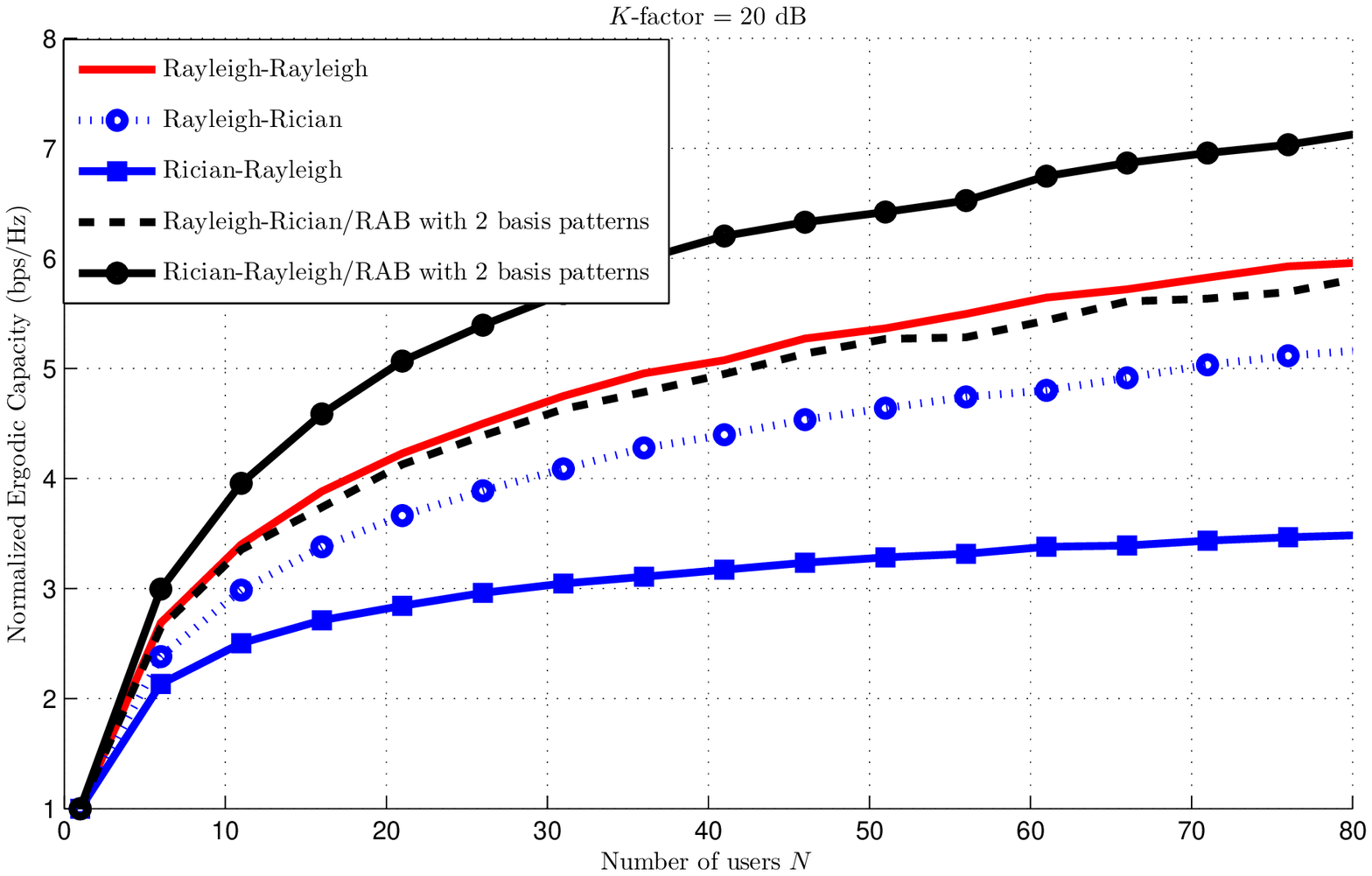}
\caption{Normalized ergodic capacity as a function of the number of SU pairs after applying RAB.}
\end{figure}

\begin{figure}[!t]
\centering
\includegraphics[width=3.5 in]{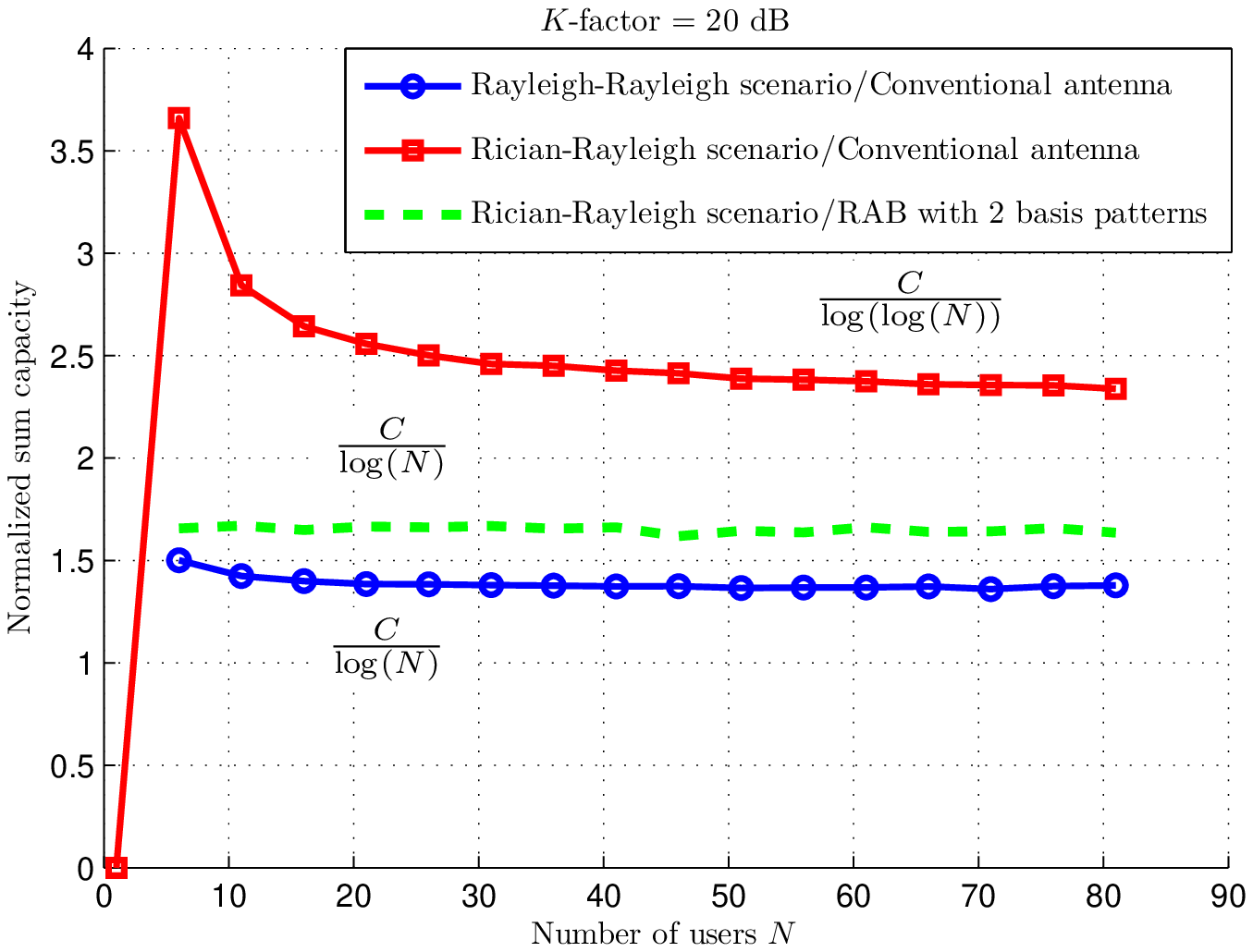}
\caption{Ergodic throughput normalized by the growth rate for several scenarios.}
\end{figure}

\section{Conclusions}
In this paper, we presented a comprehensive analysis for the negative impact of LoS mutual interference on the SU capacity in a spectrum sharing system. It was shown that when the dynamic range of the interference channel is small, the SU capacity is significantly decreased. Stemming from this point, we introduced the RAB technique to induce channel fluctuations in the interference channel using the {\it dumb basis patterns} of an ESPAR antenna. If the secondary channel contains a LoS component, we adopt an {\it artificial diversity scheme} to maintain its reliability while inducing fluctuations in the interference channels. Moreover, we investigated the impact of LoS interference on the multiuser interference diversity achieved in an underlay parallel access channel. While LoS interference alters the capacity scaling to be $\log(\log(N))$ instead of $\log(N)$, we show that using RAB, LoS interference can be actually exploited to improve the scaling pre-log factor in addition to the restoration of the $\log(N)$ growth rate. The proposed RAB scheme requires a single RF chain, and can fit within tight space limitations. Thus, it is adequate for low cost mobile transceivers. Our future work includes evaluating capacity scaling for the cognitive MAC channel \cite{33}\cite{34}, and deploying reconfigurable antennas to apply blind interference alignment \cite{32}.

\appendices

\section{Proof of Lemma 1}
\renewcommand{\theequation}{\thesection.\arabic{equation}}
By relaxing the peak interference constraint and setting $Q_{p} = \infty$ in (\ref{5}), and assuming that $\gamma_{ps}$ is deterministic and equal to $\overline{\gamma}_{ps}$, the SU capacity is given by
\[C = \int_{z = \lambda \log(2) (1+\overline{\gamma}_{ps}\overline{\gamma}_{p})}^{\infty} \log_{2}\left(\frac{z}{\lambda \log(2)(\overline{\gamma}_{ps}\overline{\gamma}_{p}+1)}\right) f_{z}(z) dz, \]
substituting $x = \frac{z}{\lambda \log(2)(\overline{\gamma}_{ps}\overline{\gamma}_{p}+1)}$, this integral can be reduced to $ \int_{x = 1}^{\infty} \log_{2}(x) f_{x}(x) dx$, which can be easily evaluated as
\[C = {\rm Ei} \left(-\lambda \log(2) (\gamma_{ps} \overline{\gamma}_{p}+1) \frac{\overline{\gamma}_{sp}}{\overline{\gamma}_{s}}\right).\]
For $\overline{\gamma}_{s} \to \infty$, we have $ \lambda \log(2) (\gamma_{ps} \overline{\gamma}_{p}+1) \frac{\overline{\gamma}_{sp}}{\overline{\gamma}_{s}} \to 0$. Thus, the SU capacity is given by $C \to \log_{2}\left(\frac{\overline{\gamma}_{s}}{\lambda \log(2) \overline{\gamma}_{sp} (\overline{\gamma}_{p} \gamma_{p} + 1)}\right) - \gamma$ [26, Sec. 8.214]. The value of $\lambda$ in an AWGN channel differs from that in a fading channel. Generally, $\lambda$ is given by
\[\lambda = \log(2)\left(Q_{av}+(\mathbb{E}\left\{\gamma_{ps}\overline{\gamma}_{p}+1)\right\}\mathbb{E}\left\{\frac{1}{z}\right\}\right).\]
For an AWGN channel, the random variables $z$ and $\gamma_{ps}$ are deterministic and $\lambda$ is given by
\[\lambda_{AWGN} = \log(2)\left(Q_{av}+(\overline{\gamma}_{ps}\overline{\gamma}_{p}+1)\frac{1}{\mathbb{E}\{z\}}\right).\]
Thus, by Jensen Inequality, $\mathbb{E}\{1/z\} \geq 1/\mathbb{E}\{z\}$, $\lambda_{AWGN} \leq \lambda$. This implies the following
\[C \lesssim \log_{2}\left(\frac{\overline{\gamma}_{s}}{\lambda_{AWGN} \log(2) \overline{\gamma}_{sp} (\overline{\gamma}_{p} \gamma_{p} + 1)}\right) - \gamma,\]
where $\log_{2}\left(\frac{\overline{\gamma}_{s}}{\lambda_{AWGN} \log(2) \overline{\gamma}_{sp} (\overline{\gamma}_{p} \gamma_{p} + 1)}\right)$ represents the AWGN capacity. For $\overline{\gamma}_{s} \to 0$, we have $ \lambda \log(2) (\gamma_{ps} \overline{\gamma}_{p}+1) \frac{\overline{\gamma}_{sp}}{\overline{\gamma}_{s}} \to \infty$. Thus, the SU capacity is given by $C \to \frac{\overline{\gamma}_{s}}{\lambda \log(2) \overline{\gamma}_{sp} (\overline{\gamma}_{p} \gamma_{p} + 1)}\exp\left(-\frac{\lambda \log(2) \overline{\gamma}_{sp} (\overline{\gamma}_{p} \gamma_{p} + 1)}{\overline{\gamma}_{s}}\right)$ [26, Sec. 8.215]. Noting that $\lambda_{AWGN} < \lambda$, the final result in the lemma directly follows.

\section{Proof of Theorem 1}
\renewcommand{\theequation}{\thesection.\arabic{equation}}
The proof applies for $h^{eq}_{s}(k)$, $h^{eq}_{sp}(k)$, and $h^{eq}_{ps}(k)$ and thus we adopt the generic channel model $h^{eq}(k)$ in (20). After applying RAB with $\mathbf{w_{R}}(k) = \left[ \frac{e^{j \theta_{R,1}(k)}}{\sqrt{M_{R}}} \,\, \ldots \,\, \frac{e^{j \theta_{R,M_{R}}(k)}}{\sqrt{M_{R}}}\right]$,  $\mathbf{w_{T}}(k) = \left[ \frac{e^{j \theta_{T,1}(k)}}{\sqrt{M_{T}}} \,\, \ldots \,\, \frac{e^{j \theta_{T,M_{T}}(k)}}{\sqrt{M_{T}}}\right]$ and performing the various matrix multiplications, $h^{eq}(k)$ can be obtained as
\[
h^{eq}(k) = \, \underbrace{\sqrt{\frac{K \overline{\gamma}}{(K+1)M_{T}M_{R}}} \sum_{l=1}^{M_{T}} \sum_{m=1}^{M_{R}} e^{j(\theta_{R,m}(k)+\theta_{T,l}(k)+\phi^{l,m})}}_{\mbox{{\fontsize{0.35cm}{1em}\selectfont Artificial fading}
} \,\,\, U(k)}.
\]
\begin{equation}
\label{Th1}
+ \underbrace{\sum_{l=1}^{M_{T}} \sum_{m=1}^{M_{R}} \sum_{i=1}^{Q} \Phi_{R,m}(\varphi_{i}(k)) \Phi^{*}_{T,l}(\varphi_{i}(k)) \beta_{i}(k) \, \frac{e^{j(\theta_{R,m}(k)+\theta_{T,l}(k))}}{\sqrt{M_{T} M_{R}}}}_{\mbox{{\fontsize{0.35cm}{1em}\selectfont Scattered component}
} \,\,\, V(k)}.
\end{equation}
The scattered component $V(k)$, which represents the underlying rich scattering environment, is known to follow a complex Gaussian distribution \cite{6}. The amplitude of this channel component is scaled by $\frac{1}{\sqrt{K+1}}$ as the variance of $\beta_{i}(k)$ is $\frac{\overline{\gamma}}{K+1}$, thus $V(k) \sim \mathcal{CN}(0,\frac{\overline{\gamma}}{K+1})$. The term $U(k)$, on the other hand, represents the artificial fading resulting from the constructively/destructively interfering LoS components perceived by different basis patterns with different phase shifts obtained by tuning the reactive loads. We are interested in deriving the pdf of this component. Using Euler identity, $U(k)$ can be represented as
\[U(k) = A \sum_{l=1}^{M_{T}} \sum_{m=1}^{M_{R}} \cos(\theta_{R,m}(k)+\theta_{T,l}(k)+\phi_{l,m})\]
\[+ j \sin(\theta_{R,m}(k)+\theta_{T,l}(k)+\phi_{l,m}),\]
where $A = \sqrt{\frac{K \overline{\gamma}}{M_{T} M_{R}(K+1)}}$. Let $U^{R}(k) = {\bf \mathfrak{Re}}\{U(k)\} = A  \sum_{l=1}^{M_{T}} y_{l}$, where $y_{l} = \sum_{m=1}^{M_{R}} y_{l,m}$, and $y_{l,m} = \cos(\theta_{R,m}(k)+\theta_{T,l}(k)+\phi_{l,m})$. RAB is applied by selecting uniformly distributed random phases $\theta_{R,m}(k) \sim \mbox{Unif}(0, 2\pi)$ and $\theta_{T,l}(k) \sim \mbox{Unif}(0, 2\pi)$. Because all phases assigned to the basis patterns are independent, then $y_{l}, l = 1,2,...,M_{T}$, are independent and identically distributed. 
It can be easily shown that when $\theta_{R,m}(k) \sim \mbox{Unif}(0,2\pi)$, then $\Psi = (\theta_{R,m}(k) + \theta_{T,l}(k) + \phi^{l,m}) \sim \mbox{Unif}(\theta_{T,l}(k) + \phi^{l,m}, \theta_{T,l}(k) + \phi^{l,m}+2\pi)$. Because we are only interested in $\Psi \, \mbox{mod} \, 2\pi$, it can be easily shown that $\Psi \, \mbox{mod} \, 2\pi \sim \mbox{Unif}(0, 2\pi)$. Using random variable transformation, the pdf of $y_{l,m}$ is given by
\begin{equation}
\label{pdfeq}
f_{y_{l,m}}(y_{l,m}) = \frac{1}{\pi \sqrt{1-y_{l,m}^{2}}}, -1 \leq y_{l,m} \leq 1,
\end{equation}
with $\mathbb{E}\{y_{l,m}\} = 0$, and $\mathbb{E}\{(y_{l,m}-\mathbb{E}\{y_{l,m}\})^{2}\} = \frac{1}{2}$. Applying the central limit theorem (CLT), we get $y_{l} = \sum_{m=1}^{M_{R}} y_{l,m} \sim \mathcal{N}(0,\frac{M_{R}}{2})$, which implies that $\sum_{l=1}^{M_{T}} y_{l} \sim \mathcal{N}(0,\frac{M_{R} M_{T}}{2})$. Therefore, $U^{R}(k) = A \sum_{l=1}^{M_{T}} y_{l} \sim \mathcal{N}(0,\frac{K \overline{\gamma}}{2(K+1)})$. It can be shown that $U^{I}(k) = {\bf \mathfrak{Im}}\{U(k)\}$ has the same distribution of $U^{R}(k)$, which implies that $U(k) \sim \mathcal{CN}\left(0,\frac{K \overline{\gamma}}{K+1}\right)$. Since $U(k) \sim \mathcal{CN}\left(0,\frac{K \overline{\gamma}}{K+1}\right)$, and $V(k) \sim \mathcal{CN}\left(0,\frac{\overline{\gamma}}{K+1}\right)$, then the equivalent channel after applying RAB will be $h^{eq}(k) = (V(k) + U(k)) \sim \mathcal{CN}(0,\overline{\gamma})$, which concludes the proof.

\section{Proof of Theorem 2}
\renewcommand{\theequation}{\thesection.\arabic{equation}}
We study each of the terms in (\ref{sch3}), and obtain their growth rates separately. Based on \cite[Lemma 2]{6}, we know that $\max_{n} \frac{\gamma_{s,n}}{\overline{\gamma}_{s}}-l_{s,N}$ converges in probability to a Gumbel distribution with a cumulative distribution function (cdf) of $\mbox{exp}(-e^{-\frac{\gamma_{s,n}}{K_{s}+1}})$, where $l_{s,N}$ is defined such that the cdf $F(l_{s,N}) = 1-\frac{1}{N}$, and is given by \cite[Eq. (5)]{6} as $l_{s,N} = \left(\sqrt{\frac{\log(N)}{K_{s}+1}} + \sqrt{\frac{K_{s}}{K_{s}+1}} \right)^{2} + \mathcal{O}(\log(\log(N)))$. Given that the mean of a Gumbel distribution is given by $l_{s,N} + \frac{\gamma}{K_{s}+1}$, the term $\frac{\mathbb{E}\left\{\max_{n} \gamma_{s,n} \right\}}{\mathbb{E}\left\{\gamma_{s,n} \right\}}$ grows like $\left(\sqrt{\frac{\log(N)}{K_{s}+1}} + \sqrt{\frac{K_{s}}{K_{s}+1}} \right)^{2} + \mathcal{O}(\log(\log(N))) + \mathcal{O}(1)$. As for the term $\frac{\mathbb{E}\left\{\max_{n} \frac{1}{\gamma_{sp,n}} \right\}}{\mathbb{E}\left\{ \frac{1}{\gamma_{sp,n}} \right\}}$, it can be written as $\frac{\mathbb{E}\left\{ \frac{1}{\min_{n} \gamma_{sp,n}} \right\}}{\mathbb{E}\left\{ \frac{1}{\gamma_{sp,n}} \right\}}$. We are thus interested in identifying the scaling behavior of $\min_{n} \gamma_{sp,n}$ when $h_{sp,n}$ follows a Rician distribution with a $K$-factor of $K_{sp}$. The cdf of $\gamma_{sp,n}$ is given by \cite{19}
\begin{equation}
\label{therm2}
F_{\gamma_{sp,n}}(\gamma_{sp,n}) = 1-Q_{1}\left(\sqrt{2K_{sp}}, \sqrt{\frac{2(K_{sp}+1)}{\overline{\gamma}_{sp}}\gamma_{sp,n}} \right),
\end{equation}
where $Q_{M}(a,b)$ is the Marcum-Q function. From \cite[Theorem 1]{27}, we know that the distribution of $x = (\min_{n} \gamma_{sp,n}-l_{sp,N})/d_{sp,N}$ converges to a Weibull distribution with $f_{x}(x) = 1-\mbox{exp}(-x^{\alpha})$ if $\lim_{t \to -\infty} \frac{F^{*}_{\gamma_{sp,n}}(t \gamma_{sp,n})}{F^{*}_{\gamma_{sp,n}}(t)} = \gamma_{sp,n}^{-\alpha}, \alpha > 0$, where $F_{z}^{*}(z) = F_{z}(\zeta(F_{z}(z))-\frac{1}{z})$, $\zeta(F_{z}(z)) = \mbox{inf}\{z: F_{z}(z)>0\}$, $l_{sp,N} = \zeta(F_{z}(z))$, and $d_{sp,N} = F_{z}^{-1}\left(\frac{1}{N}\right) - \zeta(F_{z}(z))$. It can be shown that 
\[\lim_{t \to \infty} \frac{1-Q_{1}\left(\sqrt{2K_{sp}}, \sqrt{\frac{2(K_{sp}+1)}{t\gamma_{sp,n}\overline{\gamma}_{sp}}} \right)}{1-Q_{1}\left(\sqrt{2K_{sp}}, \sqrt{\frac{2(K_{sp}+1)}{t \overline{\gamma}_{sp}}}\right)} =  \gamma_{sp}^{-1},\]       		
by replacing the Marcum-Q function with its series expansion [33, Eq. (4)] [26, Eq. (8.445)] 
\[Q_{1}\left(\sqrt{2K_{sp}}, \sqrt{\frac{2(K_{sp}+1)}{t\gamma_{sp,n}\overline{\gamma}_{sp}}} \right) = e^{-K_{sp}-\frac{K_{sp}+1}{\overline{\gamma}_{sp} \gamma_{sp,n} t }} \times \]
\[\sum_{v=0}^{\infty}\sum_{m=0}^{\infty}\frac{(K_{sp}\overline{\gamma}_{sp})^{\frac{v}{2}}  \left(\frac{K_{sp}(1+K_{sp})}{\overline{\gamma}_{sp}}\right)^{2m+v} }{\Gamma(v-1) (K_{sp}+1)^{v/2} \Gamma(m+v+1) (t\gamma_{sp})^{\frac{1}{2m+v/2}}},\]
and applying l'H\^{o}pital's rule once. From the above results, we have $\alpha = 1$, $l_{sp,N} = 0$, and $d_{sp,N} = F_{z}^{-1}\left(\frac{1}{N}\right)$. Thus, $\gamma_{sp, min} = \min_{n} \gamma_{sp,n}$ converges to an exponential distribution, i.e., $\lim_{N\to \infty} F_{\gamma_{sp, min}}(\gamma_{sp, min}) = 1-\mbox{exp}\left(-\frac{\gamma_{sp, min}}{\overline{\gamma}_{sp} d_{sp,N}}\right)$. We can obtain $d_{sp,N}$ by solving
\[Q_{1}\left(\sqrt{2K_{sp}}, \sqrt{\frac{2(K_{sp}+1)d_{sp,N}}{\overline{\gamma}_{sp}}} \right) = \frac{1}{N}\],
which can be evaluated by simple algebraic manipulation of \cite[Eq. (5)]{6} as
\[d_{sp,N} = \left(\sqrt{\frac{\log\left(\frac{N}{N-1}\right)}{K_{sp}+1}} +\sqrt{\frac{K_{sp}}{K_{sp}+1}}\right)^{2} + \mathcal{O}\left(\log\, \log\left(\frac{N}{N-1}\right)\right).\]
Because $\frac{1}{\gamma_{sp, min}}$ is a convex function, it follows from Jensen's inequality that $\mathbb{E}\left\{\frac{1}{\gamma_{sp, min}}\right\} \geq \frac{1}{\mathbb{E}\left\{\gamma_{sp, min}\right\}}$. Thus, $\mathbb{E}\left\{\frac{1}{\gamma_{sp, min}}\right\}$ grows at least as fast as $\frac{1}{\mathbb{E}\left\{\gamma_{sp, min}\right\}}$. As $\mathbb{E}\left\{\gamma_{sp, min}\right\} = \frac{1}{\overline{\gamma}_{sp} d_{sp,N}}$, the term $\frac{\mathbb{E}\left\{\frac{1}{\min_{n}\gamma_{sp,n}}\right\}}{\mathbb{E}\left\{\frac{1}{\gamma_{sp,n}}\right\}}$ grows at least as fast as $\frac{1}{d_{sp,N}}$. For large $N$, the term $\left(1-\frac{1}{N}\right) \to e^{\frac{-1}{N}}$. Hence, $\lim_{N \to \infty} d_{sp,N} = \left(\sqrt{\frac{1}{N(K_{sp}+1)}} +\sqrt{\frac{K_{sp}}{K_{sp}+1}}\right)^{2} + \mathcal{O}\left(\log(N)\right)$\textbf{($\log(1/N)??$)}. The same analysis can be applied for the PU-to-SU channel if $\overline{\gamma}_{p} >> 1$. This concludes the proof of the theorem.

\section{Proof of Theorem 3}
\renewcommand{\theequation}{\thesection.\arabic{equation}}
From (\ref{Th1}), we know that $h^{eq}_{s,n}(k) = U_{s,n}(k) + V_{s,n}(k)$. While the scattered component $V_{s,n}(k)$ follows a Rayleigh distribution, the dynamic range of the artificial fading component depends on $M_{T}$ and $M_{R}$. The largest possible absolute value of artificial fading term
\[|U_{s,n}(k)|=\]
\[\sqrt{\frac{K_{s} \overline{\gamma}_{s}}{(K_{s}+1)M_{T}M_{R}}} \, \left|\sum_{l=1}^{M_{T}} \sum_{m=1}^{M_{R}} e^{j(\theta_{R,m}(k)+\theta_{T,l}(k)+\phi_{s}^{l,m})}\right|\]
is $\sqrt{\frac{K_{s} \overline{\gamma}_{s} M_{T}M_{R}}{(K_{s}+1)}}$, when all the random phases are in the beamforming configuration with respect to the fixed component of the channel gain for a certain user. For a large number of users, i.e., $N \to \infty$, then there exists almost surely a fraction $\epsilon$ of users for which
\[\sqrt{\frac{K_{s} \overline{\gamma}_{s}}{(K_{s}+1)M_{T}M_{R}}} \, \left|\sum_{l=1}^{M_{T}} \sum_{m=1}^{M_{R}} e^{j(\theta_{R,m}(k)+\theta_{T,l}(k)+\phi_{s}^{l,m})}\right| \]
\[> \sqrt{\frac{K_{s} \overline{\gamma}_{s} M_{T}M_{R}}{(K_{s}+1)}} - \delta,\]
for any $\delta > 0$. This set of $\epsilon N$ users can be thought of as experiencing Rician fading with a fixed component magnitude close to $\sqrt{\frac{K_{s} \overline{\gamma}_{s} M_{T}M_{R}}{(K_{s}+1)}}$. Thus, the term $\frac{\mathbb{E}\left\{ \max_{n} \gamma^{eq}_{s,n} \right\}}{\mathbb{E}\left\{ \gamma^{eq}_{s,n} \right\}}$ in (\ref{sch3}) scales like
\begin{equation}
\label{Thm3eq1}
\left( \sqrt{\frac{\log(N)}{K_{s}+1}} +  \sqrt{\frac{M_{T}M_{R}K_{s}}{K_{s}+1}}\right)^{2} +\mathcal{O}(\log(\log(N))).
\end{equation}
For the SU-to-PU interference channel, we know that $h^{eq}_{sp,n}(k) = U_{sp,n}(k) + V_{sp,n}(k)$. The magnitude of the artificial fading component after RAB is
\[|U_{sp,n}(k)| = \sqrt{\frac{K_{sp} \overline{\gamma}_{sp}}{(K_{sp}+1)M_{T}}} \, \left|\sum_{l=1}^{M_{T}} e^{j(\theta_{T,l}(k)+\phi_{sp}^{l})}\right|.\]
For $N \to \infty$, a fixed $\delta > 0$, and $\epsilon \in (0,1)$, there exists almost surely a fraction $\epsilon$ of users such that
\[\sqrt{\frac{K_{sp} \overline{\gamma}_{sp}}{(K_{sp}+1)M_{T}}} \, \left|\sum_{l=1}^{M_{T}} e^{j(\theta_{T,l}(k)+\phi_{sp}^{l})}\right| < \delta.\]
Thus, these $\epsilon N$ can be considered as experiencing Rician fading with the magnitude of the fixed component equal to $\delta$. Among these $\epsilon N$, it follows from Theorem 1 that the term $\frac{\mathbb{E}\left\{\max_{n} \frac{1}{\gamma^{eq}_{sp,n}} \right\}}{\mathbb{E}\left\{\frac{1}{\gamma^{eq}_{sp,n}} \right\}}$ grows at least as fast as $ \frac{1}{\left( \sqrt{\frac{1}{\epsilon N(K_{sp}+1)}} +  \delta \right)^{2}}$.
For infinitesimally small $\delta$, the term $\frac{\mathbb{E}\left\{\max_{n} \frac{1}{\gamma^{eq}_{sp,n}} \right\}}{\mathbb{E}\left\{\frac{1}{\gamma^{eq}_{sp,n}} \right\}}$ grows linearly with $N$. The same analysis can be applied on the PU-to-SU after negelecting the noise power at the SU receiver. Combining these results, the growth rate of $\overline{\gamma}_{PAC}(N)$ after applying RAB can be easily evaluated.

\end{document}